\renewcommand\footnotetextcopyrightpermission[1]{} %
\definecolor{dkblue}{rgb}{0,0.1,0.6}
\definecolor{dkgreen}{rgb}{0,0.35,0}
\definecolor{dkviolet}{rgb}{0.3,0,0.5}
\definecolor{dkred}{rgb}{0.5,0,0}
\lstdefinelanguage{NT}{ 
mathescape=true,
texcl=false, 
morekeywords=[1]{while, do, if, then, else, for, all},
morekeywords=[2]{return, continue, goto},
morecomment=[s]{/*}{*/},
showstringspaces=false,
morestring=[b]",
morestring=[d],
tabsize=3,
extendedchars=false,
sensitive=true,
breaklines=false,
basicstyle=\ttfamily,
captionpos=b,
columns=[l]fixed,
identifierstyle={\color{black}},
keywordstyle=[1]{\color{dkviolet}},
keywordstyle=[2]{\color{dkred}},
stringstyle=\ttfamily,
commentstyle={\ttfamily\color{dkblue}},
literate=
    {true}{{{\color{dkgreen}$true$}}}3
    {false}{{{\color{dkgreen}$false$}}}4
}[keywords,comments,strings]
\newcommand{\bb}[1]{\llbracket#1\rrbracket}
\newcommand{\Rel}{\mathsf{Rel}}
\newcommand\dfa[1]{{%
  \newcommand\sstate[1]{##1}%
  \newcommand\fstate\overline%
  #1}}
\newcommand{\Eqv}{e}
\newcommand{\Pred}{\mathsf{Pred}}
\newcommand{\Sign}{\mathsf{Sign}}
\newcommand{\Nat}{\mathbb{N}}
\newcommand{\wrt}{w.r.t.}
\theoremstyle{acmdefinition}
\newtheorem{remark}[theorem]{Remark}
\newtheorem{assumption}[theorem]{Assumption}
\begin{document}
\title{Sound up-to techniques and Complete abstract domains}         %
\author{Filippo Bonchi}
\affiliation{
  \institution{University of Pisa}            %
  \city{Pisa}
  \country{Italy}                    %
}
\email{filippo.bonchi@unipi.it}          %
\author{Pierre Ganty}
\affiliation{
  \institution{IMDEA Software Institute}           %
	\city{Pozuelo de Alarc{\'o}n}
  \country{Spain}                   %
}
\email{pierre.ganty@imdea.org}
\author{Roberto Giacobazzi}
\affiliation{
  \institution{IMDEA Software Institute}           %
	\city{Pozuelo de Alarc{\'o}n}
  \country{Spain}                   %
}
\affiliation{
  \institution{and University of Verona}            %
  \city{Verona}
  \country{Italy}                    %
}
\email{roberto.giacobazzi@imdea.org}         %
\authornote{Partially supported by AFOSR.}          %
\author{Dusko Pavlovic}
\authornote{Partially supported by AFOSR.}          %
\affiliation{
  \institution{University of Hawaii}            %
  \city{Honolulu}
  \state{Hawaii}
  \country{USA}                    %
}
\email{dusko@hawaii.edu}          %

\begin{abstract}
Abstract interpretation is a method to automatically find invariants of programs or pieces of code whose semantics is given via least fixed-points.
Up-to techniques have been introduced as enhancements of coinduction, an abstract principle to prove properties expressed via greatest fixed-points. 

While abstract interpretation is always sound by definition, the soundness of up-to techniques needs some ingenuity to be proven. For completeness, the setting is switched: up-to techniques are always complete, while abstract domains are not. 

In this work we show that, under reasonable assumptions, there is an evident connection between sound up-to techniques and complete abstract domains.
\end{abstract}

\keywords{abstract interpretation, complete abstract domains, coinduction up-to, sound up-to techniques, cross-fertilization}
\maketitle
\renewcommand{\shortauthors}{F. Bonchi, P. Ganty, R. Giacobazzi, D. Pavlovic}

\section{Introduction}
Abstract interpretation \cite{cousot1977abstract} is a general method for approximating invariants of dynamic systems. The key idea is that the analysis, or the possibility of proving invariance properties of a system, can be reduced to compute an approximate semantics of the system under inspection. Any monotone function $b$ on a monotone lattice $C$ can be approximated in a sound way by a function $\overline{b}$ on an smaller lattice $A$ and an approximate invariant can be always obtained by computing the least fixed point of $\overline{b}$. 
Instances of abstract interpretation include sound-by-construction methodologies for the design of static program analyses, type analysis~\cite{Cousot1997}, and model checkers~\cite{CousotC99}. %
Examples of successful industrialisation cases of abstract interpretation in the context of program analysis are {\sc Astr{\'e}e}  \cite{Esop05Astree},
a static program analyser aiming at proving the absence of run-time errors in industrial-size safety-critical C programs, Julia, for the analysis of Java and Android code for web applications, Clousot, developed at Microsoft Research~\cite{FahndrichL2010} for statically checking Code Contracts in .Net, Infer and Zoncolan, at Facebook Inc.\ for scalable information-flow analysis~\cite{OHearn2015}.

One of the main limitation of abstract interpretation is \emph{completeness}: the least fixed point of the approximate $\overline{b}$ is not always a faithful representation of the one of $b$, hence the latter could satisfy some properties not satisfied by the former. Computing on \(A\) rather than on \(C\) can thus lead to false alarms. Completeness should be intended as absence of false alarms. 

This was first observed by \citeauthor{cousot1979systematic}~\cite{cousot1979systematic}, where they also show a strategy to prove completeness of abstract domains: one can more easily prove a sufficient condition, that we call here \emph{full completeness}, but that is known under difference names, like backward completeness \cite{GiacobazziQ01} or (stepwise) completeness \cite{GiacobazziRS00}. Interestingly enough, \citeauthor{GiacobazziRS00}~\cite{GiacobazziRS00} observed that both completeness and full completeness can be regarded not as a property of $\overline{b}$, but rather of $b$ and $A$. %
A symmetric condition, called \emph{forward completeness}, was later introduced in \cite{GiacobazziQ01} and used  for an efficient simulation equivalence algorithm \cite{RanzatoTLics07}.
The key insight here is that when $b$ is a left adjoint, then full completeness coincides with forward completeness of its right adjoint.

\smallskip

The rationale behind coinductive up-to techniques is apparently dual.
Suppose we have a characterisation of an object of interest as a
greatest fixed-point of some function. For instance, behavioural equivalence in CCS~\cite{Milner89} is
the greatest fixed-point of a monotone function $b$ on the lattice of relations,
describing the standard bisimulation game. This means that to prove
two CCS terms equivalent, it suffices to exhibit a relation $R$ that
relates them, and which is a \emph{$b$-simulation}, i.e., $R\subseteq
b(R)$. 

Alternatively, one can look for a relation \(R\) which is a $b$-simulation \emph{up to some function} $a$, i.e., $R\subseteq b(a(R))$.
However, not every function $a$ can safely be used: $a$ should be \emph{sound} for $b$, meaning that any $b$-simulation up to $a$ should be contained in a $b$-simulation. 
A similar phenomemon occurs in abstract interpretation where not all abstract domains are complete.

Since their introduction~\cite{Milner89},
coinduction up-to techniques were proved useful, if not essential, in numerous proofs
about concurrent systems (see~\cite{PS12} for a list of
references); it has been used to obtain decidability
results~\cite{Caucal90}, and more recently to improve standard
automata algorithms~\cite{bp:popl13:hkc}. It is worth to make clear at this point that, while abstract interpretation was originally  intended as a fully automated approach to program analysis, coinduction up-to has always been seen as a proof principle: this explains the increasing spread of up to techniques amongst proof assistants (see e.g.~\cite{danielsson2017up}).

Since proving soundness of these techniques is rather complicated and error prone -- a famous example of an unsound technique is that of weak bisimulation up to weak bisimilarity -- \citeauthor{San98MFCS} introduced the sufficient condition of respectfulness~\cite{San98MFCS}. This was later refined by \citeauthor{pous:aplas07:clut} with the notion of \emph{compatibility}~\cite{pous:aplas07:clut,PS12}. 

\medskip

In this paper we relate abstract interpretation with coinduction up-to. 
Our key observation (Remark~\ref{rmk:fwcompletenesscompatibility}) is that the notions of forward completeness and compatibility coincide. 
Using the key insight of \citeauthor{GiacobazziRS00}~\cite{GiacobazziRS00} saying that when $b$ is a left adjoint,  full completeness of $b$ coincides with compatibility of its right adjoint, we prove that the same abstraction can play the role of a complete abstract domain for abstract interpretation and a sound up-to technique for coinduction. 

As a noteworthy example of this correspondence, we show in this paper an abstract-interpretation based analysis of the well-known Hopcroft and Karp's algorithm~\cite{HopcroftKarp,Aho:1974:DAC:578775} for checking language equivalence of deterministic automata. In this case the optimisation allowing the reduction of the state space, which is known from up-to techniques can also be derived by abstract interpretation.

Finally the connection between complete abstract interpretations and sound up-to techniques allows some technology transfer.
As a proof of concept, we introduce in abstract interpretation the notion of companion, already known in coinduction up-to, that provides a way to simplify the checking of completeness for a generic abstraction. This also leads to the definition of a new and weaker notion of completeness for abstract interpretation which is {\em local\/} but sufficient to prove the absence of false alarms in program analysis. 

\section{Preliminaries and notation}\label{sec:preliminaries}
We use $(L,\sqsubseteq)$, $(L_1,\sqsubseteq_1)$, $(L_2,\sqsubseteq_2)$ to range over complete lattices and $x,y,z$ to range over their elements. We omit the ordering $\sqsubseteq$ whenever unnecessary. As usual $\bigsqcup$ and $\bigsqcap$ denote  least upper bound and greatest lower bound, $\sqcup$ and $\sqcap$ denote join and meet, $\top$ and $\bot$ top and bottom.

Hereafter we always consider monotone maps so we will often omit to specify that they are monotone. 
Monotone maps form a complete lattice with their natural point-wise order: whenever $f,g\colon L_1\to L_2$ then 
$f\sqsubseteq g$ iff for any $x\in L_1$: $f(x)\sqsubseteq_2 g(x)$. 
Obviously, the identity  $id\colon L\to L$ and the composition $f\comp g \colon L_1\to L_3$ of two monotone maps $g\colon L_1\to L_2$ and  $f\colon L_2\to L_3$  are monotone.

Given a monotone map $f\colon L\to L$, $x\in L$ is said to be a \emph{post-fixed point}\footnote{It is also common to find in literature the reversed definitions of post and pre-fixed point. Here we adopted the terminology of \citeauthor{davey_priestley_2002}~\cite{davey_priestley_2002}.} if{}f  $x\sqsubseteq f(x)$ and a \emph{pre-fixed point} if{}f $f(x)\sqsubseteq x$. A \emph{fixed point} if{}f $x=f(x)$. 
Pre, post and fixed points form complete lattices, denoted by $Pre(f)$, $Post(f)$ and $Fix(f)$, respectively. 
We write $\mu f$ and $\nu f$ for the least and greatest fixed-point.

For a map $f\colon L \to L$, we inductively define $f^0=id$ and $f^{n+1}=f\comp f^n$. We fix $f^\uparrow = \bigsqcup_{i\in \Nat} f^i$ and $f^\downarrow = \bigsqcap_{i\in \Nat} f^i$. A monotone map $f\colon L \to L$ is an \emph{up-closure} operator if $x\sqsubseteq f(x)$ and $ff(x) \sqsubseteq f(x)$. It is a \emph{down-closure} operator if $f(x)\sqsubseteq x$ and $f(x) \sqsubseteq ff(x)$. For any $f$, $f^\uparrow$ is an up-closure and $f^\downarrow$ is a down-closure~\cite{CC79b}. 

Given $l \colon L_1 \to L_2$ and $r\colon L_2\to L_1$, we say that $l$ is the \emph{left adjoint} of $r$, or equivalently that $r$ is the \emph{right adjoint} of $l$, written \((L_1,\sqsubseteq_1) \galois{l}{r} (L_2,\sqsubseteq_2)\), exactly when we have $l(x)\sqsubseteq_2 y$ if{}f $x \sqsubseteq_1 r(y)$ for all $x\in L_1$ and $y\in L_2$. 
Moreover, $r\comp l$ is an up-closure operator and $l\comp r$ a down-closure operator. 
When $l\comp r=id$ we say that $l,r$ form a \emph{Galois insertion} (GI), hereafter denoted as $(L_1,\sqsubseteq_1) \galoiS{l}{r} (L_2,\sqsubseteq_2)$.
Closure operators and Galois insertions are bijective  correspondence: given an up-closure operator $f\colon L\to L$, the functions $l\colon L \to Pre(f)$, defined as $l(x) = \bigsqcap \{y \mid x\sqsubseteq y \sqsupseteq f(y) \}$ is the left adjoint of $r\colon Pre(f) \to L$, defined as $r(x)=x$.

\medskip

For a monotone map $b\colon L \to L$ on a complete lattice $L$, the Knaster-Tarski fixed-point theorem characterises $\mu b$ as the least upper bound of all pre-fixed points of $b$ and $\nu b$ as the
greatest lower bound of all its post-fixed points:
\begin{equation*}\label{eq:KNfpthm}
\mu b= \bigsqcap \{ x  \mid b(x) \sqsubseteq x \} \qquad \qquad \nu b= \bigsqcup \{ x  \mid x \sqsubseteq b(x) \}\enspace .
\end{equation*} 
This immediately leads to the
\emph{induction} and \emph{coinduction} proof principles, illustrated below, on the left and on the right, respectively
\cite{Park69}.
\begin{equation}\label{eq:coinductionproofprinciple}\begin{array}{c}
    \exists x, \;  b(x)\sqsubseteq x \sqsubseteq f\\
    \hline 
    \mu b \sqsubseteq f
\end{array}
\qquad
\qquad
\begin{array}{c}
    \exists x, \; i \sqsubseteq x\sqsubseteq b(x)\\
    \hline 
    i \sqsubseteq \nu b
\end{array}
\end{equation}

Another fixed-point theorem, usually attributed to Kleene, plays an important role in our exposition. It characterises $\mu b$ and $\nu b$ as the least upper bound, respectively the greatest lower bound, of the chains
\begin{equation}\label{eq:initfinsequences}
\bot \sqsubseteq b(\bot) \sqsubseteq bb(\bot) \sqsubseteq  \dots \qquad  \top \sqsupseteq b(\top) \sqsupseteq bb(\top)  \sqsupseteq  \dots 
\end{equation}
In short, 
\begin{equation*}\label{eq:Kleenefpthm}
\mu b = \bigsqcup_{i\in \Nat} b^i(\bot) \qquad \qquad \nu b = \bigsqcap_{i\in \Nat} b^i(\top)
\end{equation*}
The assumptions are stronger than for Knaster-Tarski: for the leftmost statement, it requires the map $b$ to be \emph{Scott-continuous} (i.e.., it preserves $\bigsqcup$ of directed chains) and, for the rightmost  \emph{Scott-cocontinuous} (similar but for  $\bigsqcap$). Observe that every left adjoint is continuous (it preserves aribtrary $\bigsqcup$) and every right adjoint is cocontinuous.

\medskip

Coinduction up-to can be thought as an optimisation of the principle in \eqref{eq:coinductionproofprinciple}, right. Abstract interpretation as an optimisation of the chain in \eqref{eq:initfinsequences}, left. In both cases, the optimisation is given by an up-closure.

\section{Coinduction up-to} \label{sec:upto}
In order to motivate up-to techniques we illustrate how coinduction can be exploited to check language equivalence of automata.

\subsection{Coinduction for Deterministic Automata}\label{sec:DA}
A deterministic automaton on the alphabet $A$ is a triple $(X,
o,t)$, where $X$ is a set of states, $o\colon X\to 2=\{0,1\}$ is the output function, determining if a state $x$ is final ($o(x) = 1$) or
not ($o(x) = 0$) and $t
\colon X \to X^A$ is the transition function which returns  the next state, for each
letter $a \in A$.

Every automaton $(X, o,t)$ induces a function
$\bb{-}\colon X \to 2^{A^*}$  defined for all $x\in
X$, $a \in A$ and $w\in A^*$ as 
$
\bb{x}(\varepsilon) =  o(x)$ and 
$\bb{x}(aw)       =    \bb{t(x)(a)}(w)
$.
Two states $x,y\in X$ are said to be \emph{language equivalent}, in symbols
$x \sim y$, if{}f  $\bb{x}=\bb{y}$.
Alternatively, language equivalence can be defined coinductively as the greatest
fixed-point of a map $b$ on $\Rel_X$, the lattice of relations
over $X$.  For all $R\subseteq X^2$, $b\colon \Rel_X \to \Rel_X$ is
defined as
\begin{multline}\label{eq:functional-bisim-da}
	b(R)=\{(x,y) \mid o(x)=o(y) \text{ and,}\\
		\text{for all } a\in A, (t(x)(a), t(y)(a))\in R  \}\enspace .
\end{multline}
Indeed, one can check that $b$ is monotone and that $\nu b = \mathord{\sim}$. 

Thanks to this characterisation, one can prove $x\sim y$ by mean of the coinduction proof principle illustrated in~\eqref{eq:coinductionproofprinciple}. 
To this end, one provides a relation $R$ that is a \emph{$b$-simulation}:  a post fixed-point of $b$. 
Besides being a \(b\)-simulation, \(R\) must satisfy $\{(x,y)\}\subseteq R$.

For an example, consider the following deterministic
automaton, where final states are over lined and the transition
function is represented by labeled arrows. The relation consisting of
dashed and dotted lines is a $b$-simulation showing that $x\sim u$.
\begin{equation}\label{eq:exautomata}
	\begin{minipage}[c]{\columnwidth}
			\dfa{\xymatrix @R=.5em@C=1.5cm { 
			    \sstate{x}\ar[r]^{a,b}\ar@{--}[dddd]& 
			    \fstate{y}\ar[r]^{a,b}\ar@{--}[ddd] \ar@{.}[rddd] &
			    \fstate{z}\ar@(ur,dr)^{a,b}\ar@{--}[ddd] \ar@{--}[lddd]\\\\\\
			    & \fstate{v}\ar@/^/[r]^{a,b}& 
			    \fstate{w}\ar[l]^{a,b}\\
			    \sstate{u}\ar[ru]^a\ar@/_1.1em/[rru]^b& }}
	\end{minipage}
\end{equation}

Figure~\ref{fig:naive} illustrates an algorithm, called \texttt{Naive}, that takes in input a deterministic automaton $(X, o,t )$ and a pair of states $(x_1,x_2)$. It attempts to build a bisimulation $R$ containing $(x_1,x_2)$: if it succeeds, then $x_1\sim x_2$ and  returns true, otherwise returns false.
\begin{figure}[t]
\centering
\underline{\texttt{Naive} $(x_1,x_2)$}
\begin{codeNT}
(1) $R := \emptyset$; $todo := \emptyset$
(2) insert $(x_1,x_2)$ into $todo$
(3) while $todo$ is not empty do 
   (3.1)  extract $(x_1',x_2')$ from $todo$
   (3.2)  if $(x_1',x_2')\in R$ then goto (3)
   (3.3)  if $o(x_1')\neq o(x_2')$ then return false
   (3.4)  for all $a\in A$, 
	           insert $(t(x_1')(a),\,t(x_2')(a))$ into $todo$
   (3.5)  insert $(x_1',x_2')$ into $R$ 
(4) return true
\end{codeNT}
\caption{Naive algorithm checking language equivalence of states $x_1,x_2\in X$ for a deterministic automaton $(X, o,t)$.}
\label{fig:naive}
\end{figure}

The worst case complexity of the algorithm \texttt{Naive} is linear in the size of the computed bisimulation $R$. Therefore, it is quadratic with respect to the number of states in $X$. An optimised version of \texttt{Naive}, the well known Hopcroft and Karp algorithm \cite{Aho:1974:DAC:578775,HopcroftKarp}, can be given by means of  up-to techniques.

\subsection{Up-to techniques}
Coinduction allows to prove $i\sqsubseteq \nu b$ for a given map $b \colon C \to C$ on a complete lattice $C$ and some $i\in C$. Up-to techniques have been introduced by \citeauthor{Milner89}~\cite{Milner89} as an enhancement for coinduction. In a nutshell, an \emph{up-to technique} is a monotone map $a \colon C \to C$. A \emph{$b$-simulation up to $a$} is a post-fixed point of $ba$, that is an $x$ such that $x\sqsubseteq ba(x)$. An up-to technique $a$ is said to be \emph{sound} \wrt{} $b$ (or \(b\)-sound, for short) if the following \emph{coinduction up to} principle holds.

\begin{equation}\label{eq:coinductionuptoproofprinciple}
 \begin{array}{c}
    \exists x, \; i \sqsubseteq x\sqsubseteq ba(x)\\
    \hline 
    i \sqsubseteq \nu b
\end{array}
\end{equation}
An equivalent formulation can be given as follows.  
\begin{lemma}\label{lemma:charsound}
$a$ is $b$-sound if{}f $\nu ba \sqsubseteq \nu b$. 
\end{lemma}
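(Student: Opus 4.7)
The statement is a clean iff whose two directions both fall out by applying the ordinary coinduction principle to the map $ba$, so my plan is just to unpack definitions in the right order.

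For the ($\Leftarrow$) direction, I assume $\nu ba \sqsubseteq \nu b$ and want to verify the rule in \eqref{eq:coinductionuptoproofprinciple}. So suppose $i \sqsubseteq x \sqsubseteq ba(x)$. The point is that $x$ is a post-fixed point of the monotone map $ba\colon C \to C$, so applying the standard coinduction principle from \eqref{eq:coinductionproofprinciple} (with $ba$ in place of $b$) yields $x \sqsubseteq \nu ba$. Chaining with the hypothesis gives $i \sqsubseteq x \sqsubseteq \nu ba \sqsubseteq \nu b$, which is exactly $b$-soundness of $a$.

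For the ($\Rightarrow$) direction, I take $x := \nu ba$ as the witness. By definition it is a fixed point of $ba$, hence in particular $x \sqsubseteq ba(x)$, and trivially $x \sqsubseteq x$. Thus the premise of the $b$-soundness rule \eqref{eq:coinductionuptoproofprinciple} is satisfied with $i := \nu ba$ and witness $x$, and its conclusion delivers $\nu ba \sqsubseteq \nu b$.

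There is no real obstacle: the proof is essentially a tautology once one observes that $\nu ba$ is itself the ``canonical'' candidate $b$-simulation up to $a$, so it controls whether the up-to principle is sound. I would present the two directions as two short paragraphs (or an itemised list), explicitly citing \eqref{eq:coinductionproofprinciple} applied to the monotone map $ba$ in the $(\Leftarrow)$ step and the definition of $b$-soundness via \eqref{eq:coinductionuptoproofprinciple} in the $(\Rightarrow)$ step, so that the reader sees exactly which principle is being invoked where.
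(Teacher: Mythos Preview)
Your proof is correct and essentially identical to the paper's own argument: for $(\Rightarrow)$ you both instantiate the premise of \eqref{eq:coinductionuptoproofprinciple} with $i=x=\nu ba$ (using that $\nu ba$ is a fixed point of $ba$), and for $(\Leftarrow)$ you both apply the plain coinduction principle \eqref{eq:coinductionproofprinciple} to the map $ba$ to get $i\sqsubseteq x\sqsubseteq\nu ba$ and then chain with the hypothesis. The only difference is presentational: the paper compresses the $(\Leftarrow)$ step to ``by coinduction, $i\sqsubseteq\nu ba$'' without isolating $x\sqsubseteq\nu ba$ as you do.
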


\begin{remark}[Completeness of up-to technique]\label{rmk:completenessupto}

\noindent
Observe that, according to the above definition an up-to technique $a$ might not be \emph{complete}: it may exist an $i$ such that $i \sqsubseteq \nu b$ for which there is no $x$ satisfying  $i \sqsubseteq x\sqsubseteq ba(x)$. However, if $a$ is an up-closure operator, then $\nu b  \sqsubseteq a(\nu b)$ and using monotonicity of $b$, one obtains that 
$i \sqsubseteq \nu b = b(\nu b) \sqsubseteq b(a (\nu b))$. The question of completeness for up-to techniques has never been raised because they have always been considered up-closure operators (e.g., up-to equivalence, up-to congruence). The main reason for considering arbitrary monotone maps rather than just up-closure operators, comes from the fact that the former allows for more modular proofs of their soundness. This is discussed in more details in Remark~\ref{rmk:mod}.
\end{remark}

\subsection{Hopcroft and Karp's algorithm from the standpoint of up-to techniques}\label{ssec:HK}
For an example of  up-to technique, take the function
$\Eqv\colon \Rel_X \to \Rel_X$ mapping every relation $R\subseteq X^2$
to its equivalence closure. We will see in Section~\ref{sec:HKsoundness}, that $\Eqv$ is sound for the map $b$ defined in~\eqref{eq:functional-bisim-da}. A \emph{$b$-simulation up to $\Eqv$} is a
relation $R$ such that $R\subseteq b (\Eqv (R))$.
Consider the automaton in~\eqref{eq:exautomata} and the relation $R$
containing only the dashed lines: since $t(x)(b)=y$, $t(u)(b)=w$ and
$(y,w)\notin R$, then $(x,u)\notin b(R)$. This means that $R$ is
\emph{not} a $b$-simulation; however it is a $b$-simulation up to $\Eqv$,
since $(y,w)$ belongs to $\Eqv (R)$ and $(x,u)$ to $b(\Eqv(R))$.

This example shows that $b$-simulations up-to $\Eqv$ can be smaller than plain $b$-simulations: this idea is implicitly exploited in the Hopcroft and Karp's algorithm~\cite{Aho:1974:DAC:578775,HopcroftKarp} to check language equivalence of deterministic automata. This algorithm can be thought as an optimisation of \texttt{Naive}, where line \texttt{(3.2)} is replaced with the following\footnote{In Section \ref{sec:standpoint}, we will se that there is a little, but significant, difference between this and the original formulation from~\cite{Aho:1974:DAC:578775}.}.
\begin{codeNT}
	(3.2)  if $(x_1',x_2')\in e(R)$ then goto (3)
\end{codeNT}
This optimised algorithm skips any pair which is in the equivalence closure of $R$: during the while loop \texttt{(3)}, it always holds that
$R\subseteq b(e(R) \cup todo)$. The algorithm returns true only if $R\subseteq be(R)$. This means that $R$ is a $b$-simulation up to $e$ containing $(x_1,x_2)$.

This simple optimisation allows to reduce the worst case complexity of \texttt{Naive}: the size of the returned relation $R$ cannot be larger than $n$ (the number of states).
The case of non-deterministic automata is even more impressive:
another up-to technique, called \emph{up-to congruence}, allows for
an exponential improvement~\cite{bp:popl13:hkc}. 

\begin{remark}\label{rmk:hopcroft}
	The partition refinement algorithm by \citeauthor{hopcroft1971n}~\cite{hopcroft1971n} computes language equivalence for deterministic automata by constructing the chain defined in the right of~\eqref{eq:initfinsequences} for the $b$ in~\eqref{eq:functional-bisim-da},  e.g., $\top \sqsupseteq \{x,u\}\{y,v,w,z\}\sqsupseteq \{x,u\}\{y,v,w,z\}$ for the automaton in~\eqref{eq:exautomata}. 

The crucial observation, for showing that this chain stabilises after at-most $n$ iterations is that every element of the chain is an equivalence relation. Somehow, the computation of the chain for the greatest fixed point is already up-to equivalence. This fact will find a deeper explanation at the end of Section~\ref{sec:duality}.
\end{remark}

\section{Abstract Interpretation}\label{sec:ai}
We introduce abstract interpretation by showing a simple problem of program analysis.

\subsection{A toy program analysis}\label{sec:toyprogramanalysis}
Consider the following piece of code, where \texttt{$x$} is an integer value.
\begin{codeNT}
  $x := 5$;  while $x>0$ do { $x:=x-1$; }
\end{codeNT}
We want to prove that after exiting the loop, \texttt{$x$} has value $0$. Our analysis works on the lattice of predicates over the integers, hereafter denoted by $Pred_Z$, and makes use of the function $\ominus \!1 \colon Pred_Z \to Pred_Z$ defined as $P\ominus \!1 = \{i-1\mid i\in P  \}$, for all $P\in \Pred_Z$. We start by annotating the code so to make explicit its control flow.
\begin{codeNT}
	$x := 5$;$^{1}$  while ${}^{2} x>0 {}^{3}$  do { $x:=x-1$;$^{4}$ }$^{5}$
\end{codeNT}
We then write the following system of equations where $x^j$ contains the set of possible values that  \texttt{$x$} can have at the position $j$.
\begin{equation*}
	x^1{=}\{5\},  \; x^2 {=} x^1\cup x^4, \; x^3 {=} x^2 \cap [1,\infty), \; x^4 {=} x^3 \ominus \!1, \;  x^5 {=} x^2 \cap ({-}\infty, 0]
\end{equation*}
For $x^2$, we obtain the equation
$ x^2 = \{5\} \cup ( (x^2 \cap [1,\infty) ) \ominus \!1 )$ that has as smallest solution $\mu b$ where $b\colon Pred_Z \to Pred_Z$ is defined as 
\begin{equation}\label{eq:babstractInt}
b(P) = \{5\} \cup ( (P\cap [1,\infty) ) \ominus\! 1 )
\end{equation} for all predicates $P$. Our initial aim is to check whether $x^5 = x^2 \cap (-\infty, 0] = \mu (b) \cap (-\infty, 0]  \subseteq \{0\}$. That is $\mu b \subseteq [0,\infty)$. 

We proceed by computing $\mu b$ as in the left of~\eqref{eq:initfinsequences}:
\begin{equation}\label{eq:lfpcomp}
\emptyset \subseteq \{5\} \subseteq \{5,4\} \subseteq \dots \subseteq \{5,4,3,2,1\} \subseteq \{5,4,3,2,1,0\}
\end{equation}
Since $\{5,4,3,2,1,0\} \subseteq [0,\infty)$, we have proved our conjecture.

\subsection{Abstract domains}
We consider the standard Galois insertion-based definition of an abstract domain \cite{cousot1977abstract}.
Define an \emph{abstract domain} to be a Galois insertion $C \galoiS{\alpha}{\gamma} A$ between complete lattices \(C\) and \(A\).
Sometimes we call an abstract domain the associated up-closure, hereafter denoted by $a\colon C \to C$ \cite{cousot1979systematic}. We will always identify $A$ with $Pre(a)$.
The main idea of abstract interpretation is that in order to check whether $\mu b \sqsubseteq f$, for a map $b\colon C \to C$ and $f\in C$, the computation of $\mu b$ via the chain in~\eqref{eq:initfinsequences} can be carried more efficiently in some abstract domain $A$. One wants to define some $\overline{b}\colon A\to A$ representing an approximation of $b$ in $A$ and then check whether $\mu \overline{b} \sqsubseteq f$. Note that, in the latter inequality, the left hand side  stands in $A$, while the right one in $C$. For this reason, it is always assumed that $f\in A=Pre(a)$, that is $a(f)\sqsubseteq f$.

An approximation $\overline{b}$ is said to be \emph{sound} (\wrt{} $b$) if
$\alpha b \sqsubseteq \overline{b} \alpha$.\footnote{Soundness is often defined also by the equivalent inequation $\alpha b \gamma \sqsubseteq \overline{b}$.} 
The terminology is justified by the last point of the following lemma by Cousot and Cousot \cite{cousot1979systematic}.
\begin{lemma}\label{lemma:absouncomplete}
	Let $b,a\colon C \to C$ be a map and a closure operator with associated Galois insertion \(C \galoiS{\alpha}{\gamma} C\). Let $f\in Pre(a)$.
\begin{enumerate}
\item Let $x\in C$. Then  $x \sqsubseteq_C f$ if{}f $\alpha(x) \sqsubseteq_A f$ if{}f $a(x) \sqsubseteq_C f$.
\item In particular, $\mu b \sqsubseteq_C f$ if{}f $\alpha(\mu b) \sqsubseteq_A f$ if{}f $a(\mu b) \sqsubseteq_C f$.
\item  If $\overline{b}$ is $b$-sound, then $\alpha(\mu b) \sqsubseteq_A \mu \overline{b}$.
\item If $\overline{b}$ is $b$-sound, then $\mu \overline{b} \sqsubseteq_A f$ entails that $\mu b \sqsubseteq_C f$.
\end{enumerate} 
\end{lemma}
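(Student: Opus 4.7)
The plan is to chain the four items so that (2) is an instance of (1), (3) is the only place where soundness of $\overline{b}$ really enters, and (4) drops out of (2) and (3).

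For (1) I would rely on two facts: since $a = \gamma\alpha$ is a closure and $f \in Pre(a)$, one has $a(f) = f$ (hence $\alpha(f)$ is identified with $f$ under $A = Pre(a)$), and the adjunction gives $\alpha(x) \sqsubseteq_A y \iff x \sqsubseteq_C \gamma(y)$. Taking $y = f$ and using $\gamma(f) = f$ yields the equivalence $x \sqsubseteq_C f \iff \alpha(x) \sqsubseteq_A f$. For the equivalence with $a(x) \sqsubseteq_C f$, one direction uses monotonicity of $a$ together with $a(f) = f$ ($x \sqsubseteq f$ implies $a(x) \sqsubseteq a(f) = f$), and the other direction uses extensivity ($x \sqsubseteq a(x) \sqsubseteq f$). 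Statement (2) is then the special case $x = \mu b$.

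The heart of the argument is (3). I would show the stronger inequality $\mu b \sqsubseteq_C \gamma(\mu \overline{b})$ in $C$ and then apply $\alpha$, using $\alpha\gamma = id_A$ (GI), to land at $\alpha(\mu b) \sqsubseteq_A \mu \overline{b}$. By the induction principle in \eqref{eq:coinductionproofprinciple}, it suffices to exhibit $\gamma(\mu \overline{b})$ as a pre-fixed point of $b$. Instantiating the soundness condition $\alpha b \sqsubseteq \overline{b}\alpha$ at the element $\gamma(\mu \overline{b})$ gives
\begin{equation*}
\alpha\,b\,\gamma(\mu \overline{b}) \;\sqsubseteq_A\; \overline{b}\,\alpha\gamma(\mu \overline{b}) \;=\; \overline{b}(\mu \overline{b}) \;=\; \mu \overline{b},
\end{equation*}
and transposing across the adjunction yields $b\,\gamma(\mu \overline{b}) \sqsubseteq_C \gamma(\mu \overline{b})$, as required. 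This is the only step that is not purely formal; the main (mild) obstacle is keeping the two orderings $\sqsubseteq_C$ and $\sqsubseteq_A$ straight and remembering to use $\alpha\gamma = id$ (which is the extra ingredient that a Galois insertion provides beyond a bare Galois connection).

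Finally, (4) is immediate: assuming $\mu \overline{b} \sqsubseteq_A f$, item (3) gives $\alpha(\mu b) \sqsubseteq_A \mu \overline{b} \sqsubseteq_A f$, and then item (2) converts this to $\mu b \sqsubseteq_C f$. No Scott-continuity or Kleene-style reasoning is needed at any point; everything follows from the Knaster--Tarski induction principle and the defining properties of the Galois insertion.
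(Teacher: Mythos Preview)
Your proposal is correct and follows essentially the same route as the paper: for item~(1) both you and the paper cycle through the three inequalities using $f\in Pre(a)$ and basic closure/adjunction facts; for item~(3) both arguments show that $\gamma(\mu\overline{b})$ is a pre-fixed point of $b$ by instantiating the soundness inequality and then invoke Knaster--Tarski; and items~(2) and~(4) are derived in the same way. The only cosmetic difference is that in (3) you transpose $\alpha b\gamma(\mu\overline{b})\sqsubseteq_A \mu\overline{b}$ across the adjunction, whereas the paper first lifts $b\sqsubseteq \gamma\alpha b\sqsubseteq \gamma\overline{b}\alpha$ and then applies it at $\gamma(\mu\overline{b})$; these are two phrasings of the same step, and in fact your final use of $\alpha\gamma=id$ is not even needed since the adjunction alone already gives $\mu b\sqsubseteq_C\gamma(\mu\overline{b})\iff\alpha(\mu b)\sqsubseteq_A\mu\overline{b}$.
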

For a monotone map $b$ and an up-closure $a$, we define $\overline{b}^a$ as $\alpha \comp b \comp \gamma $. As explained by the following proposition, this approximation plays a key role.

\begin{proposition}[\cite{cousot1979systematic}, Corollary 7.2.0.4]\label{prop:alwayssound}
The map $\overline{b}^a$ is \emph{the best sound approximation}, that is:
\begin{inparaenum}[\upshape(\itshape 1\upshape)]
	\item $\overline{b}^a$ is sound and
	\item if $\overline{b}$ is sound, then  $\overline{b}^a \sqsubseteq \overline{b}$.
\end{inparaenum}
\end{proposition}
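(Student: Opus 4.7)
The plan is to unfold the definition $\overline{b}^a = \alpha \comp b \comp \gamma$ and verify the two claims separately, in each case using exactly one of the two defining equalities/inequalities of a Galois insertion: that $\gamma \comp \alpha = a$ is an up-closure on $C$ (so $id \sqsubseteq \gamma \comp \alpha$), and that $\alpha \comp \gamma = id_A$.

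For claim (1), I would spell out soundness of $\overline{b}^a$ as the inequality $\alpha \comp b \sqsubseteq \overline{b}^a \comp \alpha = \alpha \comp b \comp \gamma \comp \alpha$ between maps $C \to A$. Since $\gamma \comp \alpha$ is an up-closure on $C$, we have $id_C \sqsubseteq \gamma \comp \alpha$; monotonicity of $b$ gives $b \sqsubseteq b \comp \gamma \comp \alpha$, and monotonicity of $\alpha$ then yields the desired inequality. This is really the only place where the up-closure property of $\gamma \comp \alpha$ enters.

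For claim (2), assume an arbitrary sound approximation $\overline{b}$, i.e.\ $\alpha \comp b \sqsubseteq \overline{b} \comp \alpha$ as maps $C \to A$. Precomposing both sides with $\gamma \colon A \to C$ preserves the inequality by monotonicity, giving $\alpha \comp b \comp \gamma \sqsubseteq \overline{b} \comp \alpha \comp \gamma$. Here the Galois insertion hypothesis $\alpha \comp \gamma = id_A$ collapses the right-hand side to $\overline{b}$, so $\overline{b}^a = \alpha \comp b \comp \gamma \sqsubseteq \overline{b}$ as maps $A \to A$, which is exactly the required inequality in the pointwise order on monotone maps introduced in Section~\ref{sec:preliminaries}.

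I do not expect any serious obstacle: both parts are one-line diagram chases once the definitions are unpacked, and the only subtlety is making sure to use $\gamma \comp \alpha \sqsupseteq id$ for soundness and $\alpha \comp \gamma = id$ (which requires a Galois \emph{insertion}, not merely a Galois connection) for minimality. It is worth noting explicitly in the write-up that if one only had a Galois connection, then claim (2) would degrade to $\overline{b}^a \sqsubseteq \overline{b} \comp \alpha \comp \gamma$, which is strictly weaker than $\overline{b}^a \sqsubseteq \overline{b}$; this is the sole reason for working with insertions throughout this section.
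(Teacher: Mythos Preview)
Your proposal is correct and matches the paper's proof essentially line for line: the paper writes (1) as the one-liner $\alpha b \sqsubseteq \alpha b \gamma \alpha$ (i.e.\ exactly your use of $id \sqsubseteq \gamma\alpha$), and for (2) it instantiates the soundness inequality at $\gamma(x)$ for $x \in Pre(a)$ and simplifies via $\alpha\gamma = id_A$, which is precisely your precomposition-with-$\gamma$ argument stated pointwise. Your additional remark distinguishing Galois insertions from mere connections is a useful clarification not spelled out in the paper, but the core argument is identical.
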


Therefore, for all abstract domain $a$, there exists a sound approximation $\overline{b}^a$, that is $\mu \overline{b}^a \sqsubseteq_A f$ implies that $\mu b \sqsubseteq_C f$.

The converse implication is not guaranteed in general. One has to require completeness of the abstract domain: $a$ is \emph{complete} \wrt{}  $b$ (or $b$-complete, for short) if{}f $\alpha(\mu b) = \mu  \overline{b}^a$
\cite{cousot1979systematic,GiacobazziRS00}. Standard completeness in abstract interpretation is called here $b$-completeness. 

\begin{lemma}\label{lemma:characterizationcompleteness}
Let $b,a\colon C \to C$ be a map and a closure operator. 

If $a$ is $b$-complete then for all $f\in Pre(a)$, $\mu \overline{b}^a \sqsubseteq_A f$ if{}f $\mu b \sqsubseteq_C f$.
\end{lemma}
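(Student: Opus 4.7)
The plan is to derive the biconditional by chaining two equivalences: one comes for free from Lemma~\ref{lemma:absouncomplete} (which does not require completeness, only that $f \in Pre(a)$), and the other is exactly the definition of $b$-completeness combined with substitution.

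First I would invoke Lemma~\ref{lemma:absouncomplete}(2), instantiated at the element $\mu b \in C$: since $f \in Pre(a)$ by hypothesis, we get the equivalence
\[
\mu b \sqsubseteq_C f \iff \alpha(\mu b) \sqsubseteq_A f.
\]
This handles the ``translation'' between the concrete and the abstract side, and importantly it does not appeal to completeness at all; it only uses that $f$ lives in the abstract domain.

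Next I would use the assumption that $a$ is $b$-complete, which by definition means $\alpha(\mu b) = \mu \overline{b}^a$. Substituting this equality into the right-hand side of the equivalence above immediately yields
\[
\alpha(\mu b) \sqsubseteq_A f \iff \mu \overline{b}^a \sqsubseteq_A f,
\]
and composing with the first equivalence gives the desired statement. No induction, adjunction computation, or fixed-point unfolding is needed beyond what Lemma~\ref{lemma:absouncomplete} already packages.

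I do not foresee a real obstacle: the lemma is essentially a one-line consequence of the definition of $b$-completeness together with Lemma~\ref{lemma:absouncomplete}(2). The only small thing to double-check is that the hypothesis $f \in Pre(a)$ is indeed what is required to apply part (2) of Lemma~\ref{lemma:absouncomplete} at $x = \mu b$, which it is, since part (1) of that lemma is stated precisely for $f \in Pre(a)$. So the proof is a straightforward transitive chain of ``iff''s, and can be written in two lines.
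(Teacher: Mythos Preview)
Your proposal is correct and follows essentially the same approach as the paper: the paper's proof assumes $\alpha(\mu b) = \mu \overline{b}^a$ (the definition of $b$-completeness) and then invokes Lemma~\ref{lemma:absouncomplete}.2 to chain $\mu \overline{b}^a \sqsubseteq_A f \iff \alpha(\mu b) \sqsubseteq_A f \iff \mu b \sqsubseteq_C f$. Your write-up is slightly more verbose but the logical content is identical.
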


We will often find convenient the following alternative characterization.
\begin{lemma}[\cite{GiacobazziRS00}, Lemma 3.1]\label{lemma:alternativecompleteness}
$a$ is $b$-complete if{}f $a(\mu b) = \mu (a b) $.
\end{lemma}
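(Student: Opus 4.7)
The plan is to use the standard \emph{rolling rule} for least fixed points: for any monotone maps $f\colon L_1 \to L_2$ and $g\colon L_2 \to L_1$ between complete lattices, one has $\mu(g \circ f) = g(\mu(f \circ g))$. I would instantiate this with $f = \alpha b$ and $g = \gamma$. Since $\gamma \alpha = a$ and $\alpha b \gamma = \overline{b}^a$, this yields the bridging identity $\gamma(\mu \overline{b}^a) = \mu(ab)$, which relates the fixed point living in the abstract lattice $A$ to the one living in the concrete lattice $C$.

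With this bridge in hand, both implications collapse to a single rewriting step each. For the forward direction, assuming $\alpha(\mu b) = \mu \overline{b}^a$, I apply $\gamma$ to both sides and use $\gamma \alpha = a$ on the left together with the bridging identity on the right to obtain $a(\mu b) = \mu(ab)$. For the converse, starting from $a(\mu b) = \mu(ab)$, I rewrite the right-hand side via the bridging identity as $\gamma(\mu \overline{b}^a)$, rewrite the left-hand side as $\gamma \alpha(\mu b)$, and then apply $\alpha$ to both sides; the Galois-insertion property $\alpha \gamma = id$ (which in particular yields $\alpha a = \alpha \gamma \alpha = \alpha$) delivers $\alpha(\mu b) = \mu \overline{b}^a$, i.e., $b$-completeness of $a$.

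The only nontrivial ingredient is the rolling rule itself, which I would justify in two short lines: $g(\mu(fg))$ is a fixed point of $gf$, hence $\mu(gf) \sqsubseteq g(\mu(fg))$; conversely, $f(\mu(gf))$ is a fixed point of $fg$, so $\mu(fg) \sqsubseteq f(\mu(gf))$, and applying $g$ together with the equation $gf(\mu(gf)) = \mu(gf)$ closes the reverse inequality. Everything else is routine substitution using the two adjunction identities $\gamma\alpha = a$ and $\alpha\gamma = id$, so the only real obstacle is recognising that the rolling rule is the right tool; once it is in place, both directions are essentially one line of calculation.
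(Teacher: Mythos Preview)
Your proof is correct and takes a genuinely different route from the paper's. The paper argues each direction by two separate inequalities: for the forward direction it first observes that $a(\mu b)\sqsubseteq \mu(ab)$ always holds (via $b\sqsubseteq ab$ and idempotence of $a$), and then shows that $b$-completeness makes $a(\mu b)$ a fixed point of $ab$, giving the reverse inequality; for the converse it uses soundness for $\alpha(\mu b)\sqsubseteq \mu\overline{b}^a$ and then, from the hypothesis, shows $\alpha(\mu b)$ is a fixed point of $\alpha b\gamma$, using injectivity of $\gamma$ to strip a $\gamma$ from both sides. No bridging identity is isolated at this point (the paper only proves $\gamma(\mu\overline{b}^a)=a(\mu(ab))$ later, in Lemma~\ref{lemma:localmultiple}, and by a direct fixed-point argument rather than rolling).

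Your approach packages all four inequalities into the single rolling law $\mu(gf)=g(\mu(fg))$, instantiated with $f=\alpha b$ and $g=\gamma$, which immediately gives the bridge $\mu(ab)=\gamma(\mu\overline{b}^a)$. Both implications then become one-line rewritings along $a=\gamma\alpha$ and $\alpha\gamma=id$. This is cleaner and more reusable: the rolling rule is a general fact about monotone maps between complete lattices, so the same argument would apply verbatim in any setting where one has a Galois insertion, without re-deriving the ad~hoc fixed-point inclusions the paper uses. The paper's proof, on the other hand, is slightly more self-contained in that it never names an auxiliary lemma, and it makes explicit the always-true half $a(\mu b)\sqsubseteq\mu(ab)$, which is occasionally useful on its own. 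A minor remark: in your converse direction you could equally well invoke injectivity of $\gamma$ (which you have from $\alpha\gamma=id$) instead of applying $\alpha$; both work.
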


\subsection{Abstract Interpretation of a toy program}\label{ssec:AItoy}
Consider $\Sign_Z$, the abstract domain of signs depicted below: each element is a predicate over $Z$. The right adjoint $\gamma\colon \Sign_Z \to \Pred_Z$ is the obvious inclusion and the left adjoint $\alpha \colon \Pred_Z \to \Sign_Z$ maps any predicate $P$ into the smallest $Q$ in  $\Sign_Z$, such that $P\subseteq Q$. For instance, $\alpha(\{5,6\})=[1,\infty)$. Take $s\colon \Pred_Z \to \Pred_Z$ as $\gamma \comp \alpha$. As recalled in Section~\ref{sec:preliminaries}, $Pre(s)=\Sign_Z$.

\[{\lower-1.2cm\hbox{\xymatrix@R=0.3cm@C=0.6cm{ & Z &\\
	(-\infty,0] \ar@{-}[ur]  & (-\infty,-1] \cup [1,\infty) \ar@{-}[u]   &[0,\infty) \ar@{-}[ul] \\
	(-\infty,-1]  \ar@{-}[u] \ar@{-}[ur]  & \{0\} \ar@{-}[ur]\ar@{-}[ul]& [1,\infty) \ar@{-}[ul]  \ar@{-}[u]\\
	& \emptyset \ar@{-}[ur] \ar@{-}[ul] \ar@{-}[u]}
	}}\]

	For $b$ defined as in~\eqref{eq:babstractInt}, its best sound approximation is $\overline{b}^s(P)=   [1,\infty) \sqcup ((P\sqcap [1,\infty))\overline{\ominus\!1}^s)$ where $\overline{\ominus \!1}^s$ is again defined as $\alpha \comp \ominus \!1 \comp \gamma$, e.g.,  $\overline{\ominus \!1}^s([1,\infty))= [0,\infty)$. The computation of $\mu \overline{b}^s$ is shorter than the one of $\mu b$ in~\eqref{eq:lfpcomp}:
$\emptyset \sqsubseteq [1,\infty) \;\sqsubseteq  [0,\infty) \sqsubseteq [0,\infty)$. Since $\mu \overline{b}^s \subseteq [0,\infty)$ and since $\overline{b}^s$ is sound, one can conclude that $\mu b \subseteq [0,\infty)$.

\medskip

Imagine now that one would like to check whether, after the while loop in the toy program, \texttt{$x$} has a negative value. It is necessary to verify that  $\mu b \cap (-\infty, 0]  \subseteq (-\infty,-1]$, i.e., $\mu b \subseteq (-\infty,-1]$. After computing $\mu \overline{b}^s=[0,\infty) \not \sqsubseteq (-\infty,-1]$, one would like to conclude that the property does not hold. This deduction cannot be done in general, but only when the abstract domain is complete. %

In this case, it is pretty easy to see that $\Sign_Z$ is complete: use Lemma~\ref{lemma:alternativecompleteness} and observe that $s(\mu b)= s(\{5,4,3,2,1,0\})=[0,\infty) = \mu(sb)$. However, without knowing the value of $\mu b$, proving the completeness is rather complicated. For this reason, in the next section, we will illustrate a sufficient condition entailing completeness.

\section{Proving soundness and completeness}\label{sec:soundandcomplete}
Sufficient conditions were introduced to prove soundness of up-to techniques and completeness of abstract domains. Next, we report on several equivalent formulations of such conditions.

\begin{lemma}\label{lemma:equivalentformulation}
	Let $b\colon C\to C$ be a monotone map and $a\colon C\to C$ an up-closure operator with $C\galoiS{\alpha}{\gamma}A$ as associated pair of adjoint maps, i.e., $Pre(a)=A$. The followings are equivalent:
\begin{enumerate}[{\textbullet{}}1]
\item $ba = aba$;
\item $ab \sqsubseteq ba$ (EM law);
\item there exists a  $\overline{b}\colon A \to A$ such that  $\gamma \overline{b} =  b \gamma$ (EM lifting).
\end{enumerate}
The followings are equivalent:
\begin{enumerate}[{\(\circ\)}1]
\item $ab = aba$;
\item $ba \sqsubseteq ab$ (Kl law);
\item there exists a $\overline{b}\colon A \to A$ such that $\overline{b} \alpha = \alpha b$ (Kl extension).
\end{enumerate}
\end{lemma}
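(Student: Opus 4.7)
The plan is to leverage the standard dictionary between an up-closure $a\colon C\to C$ and its associated Galois insertion $C\galoiS{\alpha}{\gamma}A$ with $A=Pre(a)$: namely $\gamma\alpha=a$, $\alpha\gamma=id_A$, $id\sqsubseteq a$, $aa=a$, and the derived identities $a\gamma=\gamma$ and $\alpha a=\alpha$. I would prove each of the two triples as a cycle, routing the equivalence between the equation ($\bullet1$ or $\circ1$) and the lifting condition ($\bullet3$ or $\circ3$) through the inequational formulation ($\bullet2$ or $\circ2$).

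For the EM triple, $\bullet1\Leftrightarrow\bullet2$ is a one-line calculation: $ba=aba$ gives $ab\sqsubseteq aba=ba$ (using $b\sqsubseteq ba$ from $id\sqsubseteq a$); conversely, $ab\sqsubseteq ba$ yields $aba\sqsubseteq baa=ba$, while $ba\sqsubseteq aba$ is automatic. For $\bullet1\Rightarrow\bullet3$ I would set $\overline{b}\coloneqq\alpha b\gamma$ and compute $\gamma\overline{b}=ab\gamma=aba\gamma=ba\gamma=b\gamma$, using $a\gamma=\gamma$ at the last step. For $\bullet3\Rightarrow\bullet1$, postcomposing $\gamma\overline{b}=b\gamma$ with $\alpha$ on the right gives $\gamma\overline{b}\alpha=b\gamma\alpha=ba$, while precomposing the equation with $\alpha$ gives $\overline{b}=\alpha b\gamma$ (using $\alpha\gamma=id_A$); substituting then yields $aba=ba$.

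For the Kl triple, the argument is entirely dual. Here $\circ1\Leftrightarrow\circ2$ is obtained in the same one-line fashion, and for $\circ1\Leftrightarrow\circ3$ I would again pick $\overline{b}\coloneqq\alpha b\gamma$. Now $\overline{b}\alpha=\alpha b\gamma\alpha=\alpha ba$; since $\gamma$ is injective (consequence of $\alpha\gamma=id_A$), the equation $\overline{b}\alpha=\alpha b$ is equivalent to $\gamma\overline{b}\alpha=\gamma\alpha b$, i.e., $aba=ab$. The role played by $a\gamma=\gamma$ in the EM case is played here by $\alpha a=\alpha$.

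The only real subtlety is keeping the two sides of the duality from leaking into each other: in particular, one must notice that the existence of \emph{some} $\overline{b}$ with the lifting property forces $\overline{b}=\alpha b\gamma$ (by using $\alpha\gamma=id_A$ in the EM case, $\gamma\alpha=a$ with injectivity of $\gamma$ in the Kl case), so that the same canonical candidate $\alpha b\gamma$ serves both cycles. Once that observation is in hand, the rest of the proof is a bookkeeping exercise with the closure identities and needs no additional ingredient.
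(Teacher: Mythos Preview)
Your proposal is correct and follows essentially the same approach as the paper: the equivalences $1\Leftrightarrow 2$ are handled identically, and for the lifting conditions both you and the paper reduce to the canonical candidate $\overline{b}=\alpha b\gamma$ in the Kl case. The only cosmetic difference is that for the EM case the paper argues element-wise (showing that $b$ preserves $Pre(a)$ and taking $\overline{b}$ to be the restriction, using that $\gamma$ is the inclusion), whereas you uniformly work with $\overline{b}=\alpha b\gamma$ and the identities $a\gamma=\gamma$, $\alpha a=\alpha$; your treatment is slightly more symmetric but the content is the same.
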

These facts are well known and appear in different places in literature: the reader can find a proof in
Appendix~\ref{app:proof}.
We call a monotone map $a\colon C \to C$  \emph{compatible} \wrt{}  $b$ if it enjoys the property $\bullet 2$ in Lemma~\ref{lemma:equivalentformulation}; we call it \emph{fully complete} \wrt{}  $b$, if it enjoys $\circ 2$. Compatibility entails soundness for up-to techniques, while full completeness entails completeness for abstract domains.
\begin{theorem}[\cite{PS12}, Theorem 3.6.9]\label{prop:compatible}
Let $b,a\colon C \to C$ be a map and an up-closure. If $a$ is compatible \wrt{}  $b$, then \(a\) is sound \wrt{}  $b$.
\end{theorem}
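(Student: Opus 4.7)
The plan is to use Lemma~\ref{lemma:charsound} and reduce the statement to showing $\nu(ba) \sqsubseteq \nu b$, or more directly, to show that every $b$-simulation up to $a$ can be enlarged to a genuine $b$-simulation. The natural candidate for the enlargement is $a$ itself: given $x \sqsubseteq ba(x)$, I would try to prove that $a(x)$ is a post-fixed point of $b$, i.e., $a(x) \sqsubseteq b(a(x))$. Then coinduction~\eqref{eq:coinductionproofprinciple} combined with $x \sqsubseteq a(x)$ (from $a$ being an up-closure) will immediately yield $x \sqsubseteq \nu b$.

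The key computation chains together three ingredients: monotonicity of $a$, compatibility $ab \sqsubseteq ba$, and idempotency $aa = a$ of the up-closure. Concretely, I would write
\begin{equation*}
a(x) \;\sqsubseteq\; a(ba(x)) \;=\; aba(x) \;\sqsubseteq\; baa(x) \;=\; ba(x) \;=\; b(a(x)),
\end{equation*}
where the first inequality is monotonicity of $a$ applied to $x \sqsubseteq ba(x)$, the second uses compatibility at the point $a(x)$, and the last equality uses $aa=a$. This gives $a(x) \sqsubseteq b(a(x))$ as required.

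To conclude, I would spell out the soundness statement: given any $i$ and $x$ with $i \sqsubseteq x \sqsubseteq ba(x)$, the argument above shows $a(x)$ is a $b$-simulation, so coinduction yields $a(x) \sqsubseteq \nu b$, and since $i \sqsubseteq x \sqsubseteq a(x)$ we obtain $i \sqsubseteq \nu b$, which is exactly the coinduction up to principle~\eqref{eq:coinductionuptoproofprinciple}.

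The argument is essentially an equational manipulation, so the only real obstacle is making sure the idempotency of $a$ is in place (it is, since $a$ is assumed to be an up-closure, so $aa(x) = a(x)$ from the two inequalities $aa(x) \sqsubseteq a(x)$ and $a(x) \sqsubseteq aa(x)$). A subtle point worth flagging is that the result would not go through if $a$ were merely monotone without being up-closed: the step $a(x) \sqsubseteq x$-equivalent manipulation $baa(x) = ba(x)$ would fail, and indeed Remark~\ref{rmk:completenessupto} hints that much of the theory is cleaner when $a$ is assumed to be an up-closure.
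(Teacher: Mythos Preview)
Your proof is correct and follows essentially the same approach as the paper: both show that for any $y \sqsubseteq ba(y)$ the element $a(y)$ is a post-fixed point of $b$ via the chain $a(y) \sqsubseteq aba(y) \sqsubseteq baa(y) \sqsubseteq ba(y)$, then conclude by coinduction. The only cosmetic difference is that you write some steps as equalities (using $aa=a$) where the paper uses inequalities, which is harmless since $a$ is an up-closure.
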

\begin{theorem}[\cite{cousot1979systematic}, Theorem 7.1.0.4]\label{prop:Cousot} 
Let $b,a\colon C \to C$ be a map and an up-closure. If $a$ is fully complete \wrt{}  $b$, then \(a\) is complete \wrt{}  $b$. 
\end{theorem}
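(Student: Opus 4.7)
The plan is to reduce the statement to the fixed-point equality \(a(\mu b) = \mu(ab)\) via Lemma~\ref{lemma:alternativecompleteness}, and then to verify this equality by showing both inequalities directly. Throughout, I will use full completeness in its equational form \(aba = ab\) (the property \(\circ 1\) of Lemma~\ref{lemma:equivalentformulation}), together with the defining properties of an up-closure, namely \(\mathit{id} \sqsubseteq a\) and \(a \comp a = a\).

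For the direction \(\mu(ab) \sqsubseteq a(\mu b)\), the key observation is that full completeness makes \(a(\mu b)\) into a fixed point of \(ab\): indeed, using \(aba = ab\) and then \(b(\mu b) = \mu b\), we compute
\begin{equation*}
ab\bigl(a(\mu b)\bigr) \;=\; aba(\mu b) \;=\; ab(\mu b) \;=\; a(\mu b)\enspace .
\end{equation*}
Since \(a(\mu b)\) is in particular a pre-fixed point of \(ab\), the Knaster--Tarski characterisation gives \(\mu(ab) \sqsubseteq a(\mu b)\). This is the step where full completeness is genuinely used.

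For the converse direction \(a(\mu b) \sqsubseteq \mu(ab)\), full completeness is not needed. From \(\mathit{id} \sqsubseteq a\) one gets \(b \sqsubseteq ab\) by monotonicity, hence \(\mu b \sqsubseteq \mu(ab)\) by monotonicity of the least-fixed-point operator; applying \(a\) yields \(a(\mu b) \sqsubseteq a(\mu(ab))\). Finally, since \(\mu(ab) = ab(\mu(ab))\) lies in the image of \(a\), the idempotence \(aa = a\) gives \(a(\mu(ab)) = \mu(ab)\), and we conclude.

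Combining the two inequalities proves \(a(\mu b) = \mu(ab)\), and hence by Lemma~\ref{lemma:alternativecompleteness} the closure \(a\) is \(b\)-complete. The only subtlety worth flagging is the decision to work with the alternative characterisation of completeness rather than with \(\alpha(\mu b) = \mu \overline{b}^a\) directly; the equivalence of these two formulations, together with the equivalent forms of full completeness supplied by Lemma~\ref{lemma:equivalentformulation}, is what makes the argument a short algebraic manipulation rather than a computation that has to cross the Galois insertion.
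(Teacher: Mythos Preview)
Your proof is correct and takes a genuinely different route from the paper's own argument. The paper works directly with the definition \(\alpha(\mu b) = \mu\,\overline{b}^a\): it assumes Scott-continuity of \(b\) (tacitly, since the statement does not list it), uses Kleene iteration to write \(\mu b = \bigsqcup_n b^n(\bot_C)\) and \(\mu\,\overline{b}^a = \bigsqcup_n (\alpha b\gamma)^n(\bot_A)\), and then shows by induction on \(n\) that \(\alpha b^n(\bot_C) = (\alpha b\gamma)^n(\bot_A)\), invoking the Kleisli-extension form \(\overline{b}\alpha = \alpha b\) of full completeness in the inductive step.

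By contrast, you go through Lemma~\ref{lemma:alternativecompleteness} and work entirely with Knaster--Tarski: full completeness in the form \(aba = ab\) is used once to certify that \(a(\mu b)\) is a fixed point of \(ab\), and the remaining inequality is the one that the paper itself notes always holds (it appears verbatim in the proof of Lemma~\ref{lemma:alternativecompleteness}). Your argument is shorter, avoids crossing the Galois insertion, and---more importantly---does not rely on any continuity assumption, so it actually proves the theorem exactly as stated. The paper's approach, on the other hand, makes the step-wise nature of full completeness vivid by matching the two Kleene chains term by term, which is closer in spirit to how abstract interpreters compute.
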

It is important to remark here that both theorems state sufficient conditions that are not, in general, necessary: we have seen in Section~\ref{ssec:AItoy} that the abstract domain of signs is complete but, as we will see in Section~\ref{sec:Signcomplete}, it is not  fully complete.
Indeed, compatibility and full completeness, as characterised by points $\bullet 3$ and $\circ 3$ of Lemma~\ref{lemma:equivalentformulation}, require step wise correspondences between $b$ and $\overline{b}$, visualized below,
\[
\xymatrix@R=0.2cm@C=0.2cm{
A \ar[rr]^{\overline{b}} \ar[dd]_\gamma & & A  \ar[dd]^\gamma \\ 
&  \\
C \ar[rr]_b & &  C 
}
\qquad
\qquad
\xymatrix@R=0.2cm@C=0.2cm{A \ar[rr]^{\overline{b}} & & A\\ 
&  \\
C \ar[rr]_b \ar[uu]^\alpha & &  C \ar[uu]_\alpha
}
\]
while soundness and completeness require the correspondences just of the fixed points (that can happen to hold for different reasons).

The formulations of compatibility and full completeness provided by points $\bullet 2$ and $\circ 2$ of Lemma~\ref{lemma:equivalentformulation} are more handy to make modular proofs of compatibility and full-completeness, as shown in the next proposition. For compatibility, $h$, ($h_1$, $h_2$) in Proposition~\ref{prop:mod} will play the role of $b$, while $g$, ($g_1$, $g_2$) the role of the up-to technique $a$. For abstract domains instead, the situation is reversed: $h$, ($h_1$, $h_2$) will play the role of $a$, while $g$, ($g_1$, $g_2$) the role of $b$.

\begin{proposition}[modularity]\label{prop:mod}
Let $g,h,g_1,g_2,h_1,h_2\colon C\to C$ be monotone maps on some complete lattice $C$. 
Then:
\begin{enumerate}[1.]
\item $id \comp h \sqsubseteq h \comp id$;
\item if  $g_1 \comp h \sqsubseteq h \comp g_1$ and $g_2 \comp h \sqsubseteq h \comp g_2$, then $(g_1 \comp g_2) \comp h \sqsubseteq h \comp (g_1 \comp g_2)$.
\end{enumerate}
Moreover:
\begin{enumerate}[1.]
\setcounter{enumi}{2}
\item if  $g_1 \comp h \sqsubseteq h \comp g_1$ and $g_2 \comp h \sqsubseteq h \comp g_2$, then $(g_1 \sqcup g_2) \comp h \sqsubseteq h \comp (g_1 \sqcup g_2)$;
\item if  $g \comp h \sqsubseteq h \comp g$, then $g^\uparrow \comp h \sqsubseteq h \comp g^\uparrow$.
\end{enumerate}
Dually:
\begin{enumerate}[1.]
\setcounter{enumi}{4}
\item if  $g \comp h_1 \sqsubseteq h_1 \comp g$ and $g \comp h_2 \sqsubseteq h_2 \comp g$, then $g\comp  (h_1 \sqcap  h_2) \sqsubseteq (h_1 \sqcap h_2) \comp g$;
\item if  $g \comp h \sqsubseteq h \comp g$, then $g \comp h^\downarrow \sqsubseteq h^\downarrow \comp g$.
\end{enumerate}
\end{proposition}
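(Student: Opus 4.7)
The plan is to establish items 1 and 2 as the algebraic core, then derive items 3 and 4 by distributivity and induction, and finally obtain items 5 and 6 by dualizing. Item 1 is immediate, since both sides equal $h$. Item 2 unfolds by associativity and two applications of the hypotheses, combined with monotonicity of composition in each argument:
\[
(g_1 \comp g_2) \comp h = g_1 \comp (g_2 \comp h) \sqsubseteq g_1 \comp (h \comp g_2) = (g_1 \comp h) \comp g_2 \sqsubseteq (h \comp g_1) \comp g_2 = h \comp (g_1 \comp g_2).
\]

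For item 3, the key observation is that composition on the right distributes over pointwise joins: $(g_1 \sqcup g_2) \comp h = (g_1 \comp h) \sqcup (g_2 \comp h)$. The hypotheses bound each summand above by $h \comp g_1$ and $h \comp g_2$ respectively, so their join lies below $(h \comp g_1) \sqcup (h \comp g_2)$, and monotonicity of $h$ then gives $h \comp g_i \sqsubseteq h \comp (g_1 \sqcup g_2)$ for $i = 1, 2$, closing the chain. For item 4, I would prove by induction on $n$ that $g^n \comp h \sqsubseteq h \comp g^n$: the base case is item 1, and the inductive step is item 2 applied to $g$ and $g^n$. Taking the supremum and again exploiting that composition on the right distributes over pointwise suprema, one obtains $g^\uparrow \comp h = \bigsqcup_i (g^i \comp h) \sqsubseteq \bigsqcup_i (h \comp g^i) \sqsubseteq h \comp g^\uparrow$; the final inequality uses only monotonicity of $h$, since every $h \comp g^i$ is below $h \comp g^\uparrow$.

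Items 5 and 6 fall out of the dual argument. The only new ingredient is that monotonicity of any $g$ gives $g(x \sqcap y) \sqsubseteq g(x) \sqcap g(y)$, the dual of the inequality used in item 3. For item 5, one chains
\[
g \comp (h_1 \sqcap h_2) \sqsubseteq (g \comp h_1) \sqcap (g \comp h_2) \sqsubseteq (h_1 \comp g) \sqcap (h_2 \comp g) = (h_1 \sqcap h_2) \comp g,
\]
invoking the two hypotheses in the middle step and the fact that composition on the right distributes over meets at the end. For item 6, induction (using items 1 and 5 as before) gives $g \comp h^n \sqsubseteq h^n \comp g$ for every $n$, and then $g \comp h^\downarrow \sqsubseteq \bigsqcap_i (g \comp h^i) \sqsubseteq \bigsqcap_i (h^i \comp g) = h^\downarrow \comp g$.

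I do not anticipate a genuine obstacle. The only point worth spelling out explicitly is the distributivity of composition on the right over pointwise joins and meets, which is the sole non-purely-algebraic fact invoked anywhere in the argument; everything else is associativity, monotonicity, and the induction scheme used to reach $g^\uparrow$ and $h^\downarrow$.
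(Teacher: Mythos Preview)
Your proof is correct and essentially follows the paper's approach, which simply cites \cite{pous:aplas07:clut,PS12} for items 1--4 and invokes duality for items 5 and 6; you have supplied the details those references would contain. One small slip: in the induction for item 6 you write ``using items 1 and 5 as before,'' but the inductive step for $g \comp h^n \sqsubseteq h^n \comp g$ requires the composition statement (item 2, or rather its trivially equivalent dual reformulation), not the meet statement (item 5).
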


\begin{remark}\label{rmk:fwcompletenesscompatibility}
\noindent
The notion of compatibility is also known in abstract interpretation as \emph{forward completeness}, see \cite{GiacobazziQ01}, which corresponds to require that no loss of precision is introduced by approximate the range of a function in a given abstract domain. This notion have been used for generalising strong preservation to abstract interpretation-based model checking~\cite{RanzatoT07}. 
\end{remark}

 \subsection{Proving soundness of equivalence closure}\label{sec:HKsoundness}
 Recall the monotone map $b\colon \Rel_X \to \Rel_X$ defined in~\eqref{eq:functional-bisim-da} and the up-closure $e\colon \Rel_X \to \Rel_X$ introduced in Section~\ref{ssec:HK}.
 In order to prove that the Hopcroft and Karp algorithm is sound one has to rely on the fact that $e$ is sound \wrt{}  $b$. 
 Thanks to Theorem~\ref{prop:compatible}, one can prove soundness by showing that $e$ is compatible \wrt{}  $b$. The proof of compatibility can be made modular using Proposition~\ref{prop:mod}.

The map $b$ can be decomposed as $b = b_* \sqcap f$ where $b_*,f \colon \Rel_X\to  \Rel_X$ are defined for all relations $R$ as
\begin{align}\label{eq:HKbf}
b_*(R) &= \{(x,y) \mid \text{ for all } a\in A, \, (t(x)(a), t(y)(a))\in R  \} \\ 
f(R) &= \{(x_1,x_2)  \mid o(x_1)=o(x_2) \}\label{eq:test}
\end{align}
and the equivalence closure as $e= (id\sqcup r \sqcup s \sqcup t )^\uparrow$ where 
$r,s,t\colon \Rel_X \to \Rel_X$ are defined as follows.
\begin{align*}
	r(R)= \{(x,x)\mid x\in X \} \qquad s(R)=\{(y,x)\mid (x,y)\in R\}\\
	t(R)=\{(x,z) \mid \exists y \text{ such that } (x,y) \in R \text{ and } (y,z)\in R\}  
\end{align*}

The proof of  compatibility of $e$ \wrt{}  $b$ can be decomposed by compatibility of $e$ \wrt{}  $b_*$ and $f$ and then use Proposition~\ref{prop:mod}.5.
Furthermore, to prove that $e$ is compatible \wrt{}  $b_*$, one can prove that $r,s,t$ are compatible \wrt{}  $b_*$ and then use points 1,3 and 4 of Proposition~\ref{prop:mod}. 
For $f$, it is immediate to check that $ef\sqsubseteq fe=f$, that is $f(R)$ is an equivalence relation for all $R\in \Rel_X$.

\begin{remark}\label{rmk:mod}
	Proving compatibility of each of $r,s,t$ is much simpler than proving compatibility of the whole $e$ at once. 
	Observe that while $e$ is an up-closure, the maps $r,s,t$ are \emph{not}. As anticipated in Remark~\ref{rmk:completenessupto}, this is the main explanation of why it is convenient to consider up-to techniques as arbitrary monotone maps, rather than just up-closures. 
\end{remark}
\begin{remark}
Some works \cite{GiacobazziRS00,GiacobazziLR15} have studied modularity for proofs of full-completeness of abstract domains, but always focusing on up-closures rather than on monotone maps. This example, together with the results in Section~\ref{sec:bridge}, shows that also for abstract domains could be convenient to decompose up-closures into smaller monotone maps. Indeed, the pointwise least-upper bound of closure operators is not necessarily a closure, but a mere monotone map. Thanks to Proposition~\ref{prop:mod}.4, one can first makes modular proofs with monotone maps and then transforms them through 
the operator \( (\cdot)^{\uparrow} \) into up-closures.
\end{remark}

\subsection{Completeness and the domain of signs}\label{sec:Signcomplete}
Recall the domain of signs in Section~\ref{ssec:AItoy} and $b\colon \Pred_Z \to \Pred_Z$ defined in~\eqref{eq:babstractInt}. One would like to prove that $s$ is complete \wrt{}  $b$ by mean of Theorem~\ref{prop:Cousot}, namely by proving that $s$ is fully complete. But this approach does not work.

For later use, it is convenient to decompose $b$ as $i \sqcup b^*$ where $i,b^* \colon \Pred_Z \to  \Pred_Z$ are defined for all predicates $P$ as follows:
\begin{equation}\label{eq:toyif}
i(P) = \{ 5 \}\qquad b^*(P) = (P\cap [1,\infty) ) \ominus\! 1 \enspace .
\end{equation}
Observe that $s$ is fully complete \wrt{}  $i$, more generally any abstract domain $a$ is fully complete with any constant function $c$ (that is $c\sqsubseteq a(c)$), but \emph{not} \wrt{}  $b^*$. To see the latter, take for instance $x=\{3\}$, and observe that $b^*s(x)=[0, \infty)$ and $sb^*(x)=[1,\infty)$, that is 
\begin{equation}\label{eq:notfc}
b^*s \not \sqsubseteq sb^*\enspace .
\end{equation}
The same $x$ shows that $(i\sqcup b^*)s \not \sqsubseteq s(i\sqcup b^*)$.

\section{Relating Abstract Interpretation and Coinduction up-to by adjointness}\label{sec:bridge}
So far, we have seen that coinduction up-to and abstract interpretation exploit a closure operator $a$ to check, respectively,  $i \sqsubseteq \nu b$ and $\mu b \sqsubseteq f$ for some $b\colon C \to C$ and $i,f \in C$. To relate them, hereafter we assume, for coinduction up-to, that $b= b_* \sqcap f$ and, for abstract interpretation, that $b=i \sqcup b^*$ where $b^*$ and $b_*$ are left and right adjoint. Intuitively, the elements of $C$ represent some predicates, or conditions, $i$ and $f$ initial and final conditions, and $b^*$ and $b_*$ predicate transformers mapping a condition into, respectively, its strongest postcondition and weakest precondition. (Note that above and hereafter we implicitly identify $i,f\in C$ with the constant maps $i,f\colon C \to C$).

In this setting, the problems addressed by coinduction up-to and abstract interpretation, namely $i \sqsubseteq \nu (b^* \sqcap f)$ and $\mu (b_*\sqcup i) \sqsubseteq f$, coincide as shown by the following well-known fact.

\begin{proposition}\label{prop:correspondencefixedpoints}Let $C \galois{b^*}{b_*} C$ and $i,f,x\in C$. 
\[(b^*\sqcup i)(x) \sqsubseteq x \sqsubseteq f \qquad \text{ if{}f } \qquad 
i \sqsubseteq x \sqsubseteq (b_*\sqcap f) (x)\]
\end{proposition}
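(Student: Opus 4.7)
The plan is to unpack both sides into a conjunction of three atomic inequalities, observe that two of them are literally shared, and then close the gap on the remaining pair with a single application of the adjunction.

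First I would rewrite the left-hand side. Since the join is the least upper bound, $(b^* \sqcup i)(x) \sqsubseteq x$ is equivalent to the conjunction $b^*(x) \sqsubseteq x$ and $i \sqsubseteq x$. Adding back the conjunct $x \sqsubseteq f$ gives the equivalent triple
\[
i \sqsubseteq x, \qquad b^*(x) \sqsubseteq x, \qquad x \sqsubseteq f.
\]
Analogously, on the right-hand side, $x \sqsubseteq (b_* \sqcap f)(x)$ means $x \sqsubseteq b_*(x)$ and $x \sqsubseteq f$ (using that meet is the greatest lower bound), so the right-hand side unpacks to
\[
i \sqsubseteq x, \qquad x \sqsubseteq b_*(x), \qquad x \sqsubseteq f.
\]

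The first and third conjuncts coincide, so it suffices to prove $b^*(x) \sqsubseteq x \iff x \sqsubseteq b_*(x)$. This is exactly the defining property of the adjunction $C \galois{b^*}{b_*} C$ applied with the two arguments instantiated to $x$ (on both sides). Hence the two triples are equivalent, and recombining the conjuncts yields the claimed equivalence.

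I do not foresee a real obstacle: the whole statement is essentially a repackaging of the adjunction law, and the only care needed is to make sure the decompositions of the join and meet are written in the form that exposes this symmetry. If desired, the proof can be presented in a single bi-implicational chain, applying the adjunction in the middle step, which keeps the argument to just a few lines.
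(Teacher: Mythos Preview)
Your proposal is correct and follows essentially the same approach as the paper: both arguments decompose the join and meet into their component inequalities and then invoke the adjunction to trade $b^*(x)\sqsubseteq x$ for $x\sqsubseteq b_*(x)$. The paper presents the two implications separately while you package them as a single equivalence of triples, but the content is identical.
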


The result below follows immediately by Knaster-Tarski.

\begin{corollary}\label{cor:coincidence}
$\mu (b^*\sqcup i)\sqsubseteq f$ if{}f $i \sqsubseteq \nu (b_*\sqcap f) $.
\end{corollary}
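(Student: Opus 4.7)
The plan is to reduce both sides of the biconditional to the existence of a witness element, and then apply Proposition~\ref{prop:correspondencefixedpoints} to identify the two witness conditions. Specifically, Knaster--Tarski (in the form of the induction and coinduction proof principles of equation \eqref{eq:coinductionproofprinciple}) gives:
\[
\mu(b^*\sqcup i)\sqsubseteq f \;\iff\; \exists x,\; (b^*\sqcup i)(x)\sqsubseteq x\sqsubseteq f,
\]
and
\[
i\sqsubseteq\nu(b_*\sqcap f) \;\iff\; \exists x,\; i\sqsubseteq x\sqsubseteq(b_*\sqcap f)(x).
\]

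First I would justify both biconditionals. The ``$\Leftarrow$'' direction of the first is exactly the induction principle (left of \eqref{eq:coinductionproofprinciple}) applied to $b^*\sqcup i$; the ``$\Rightarrow$'' direction is trivial by taking $x:=\mu(b^*\sqcup i)$, which is a pre-fixed point and satisfies $x\sqsubseteq f$ by assumption. Symmetrically, the second biconditional uses the coinduction principle (right of \eqref{eq:coinductionproofprinciple}) with the witness $x:=\nu(b_*\sqcap f)$.

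Next, I would invoke Proposition~\ref{prop:correspondencefixedpoints}, which states exactly that, for every $x\in C$, the condition $(b^*\sqcup i)(x)\sqsubseteq x\sqsubseteq f$ is equivalent to $i\sqsubseteq x\sqsubseteq(b_*\sqcap f)(x)$. Since the two witness conditions are equivalent pointwise, their existential closures are equivalent as well. Chaining this with the two biconditionals above gives the desired equivalence $\mu(b^*\sqcup i)\sqsubseteq f \iff i\sqsubseteq\nu(b_*\sqcap f)$.

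There is no substantive obstacle here: the work lies entirely in Proposition~\ref{prop:correspondencefixedpoints} (which packages the adjunction $C\galois{b^*}{b_*}C$ into a per-element swap between pre- and post-fixed-point data), and the corollary is merely its existential quantification via Knaster--Tarski. If anything, the only care needed is to remember that the induction/coinduction principles are used in both directions, and that the trivial ``$\Rightarrow$'' directions rely on the fact that $\mu(b^*\sqcup i)$ and $\nu(b_*\sqcap f)$ are themselves fixed points, so they serve as their own witnesses.
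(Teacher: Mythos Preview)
Your proposal is correct and follows exactly the approach the paper intends: the corollary is stated as an immediate consequence of Proposition~\ref{prop:correspondencefixedpoints} via Knaster--Tarski, and you have spelled out precisely that reduction (including the trivial directions by taking the fixed points themselves as witnesses). There is nothing to add.
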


Our key observation is that, in this case, the sufficient conditions ensuring soundness of up-to techniques --compatibility-- and completeness of abstract interpretation --full completeness-- are closely related. Indeed, as already noticed in  \cite[Lemma~4.2]{GiacobazziRS00}, it is straightforward to see that whenever $b^*$ and $b_*$ are adjoint and $a$ is an up-closure:
\begin{equation}\label{eq:bridge}
b^*a \sqsubseteq ab^*\quad \text{ if{}f } \quad ab_* \sqsubseteq b_*a \enspace .
\end{equation}
Full completeness of $a$ \wrt{}  $i\sqcup b^*$ amounts to $(b^* \sqcup i) a \sqsubseteq a (b^* \sqcup i)$
which, by Proposition~\ref{prop:mod}.3, is entailed by 
\begin{equation}\label{fori}
b^*a \sqsubseteq a b^*  \quad \text{ and } \quad i \sqsubseteq ai \enspace .
\end{equation}
Compatibility of $a$ \wrt{}  $b_*\sqcap f$ amounts to 
\( a(b_*\sqcap f) \sqsubseteq (b_* \sqcap f) a \)
which, by Proposition~\ref{prop:mod}.5, is entailed by 
\begin{equation}\label{backf}
ab_* \sqsubseteq b_*  a \quad \text{ and } \quad af \sqsubseteq f \enspace .
\end{equation}
Observe that there is a slight asymmetry in~\eqref{fori} and~\eqref{backf}: the condition $i \sqsubseteq ai $ is guaranteed for any $i\in C$, since $a$ is an up-closure. This  is not the case for $af \sqsubseteq f $. But the latter condition is anyway necessary to make abstract interpretation meaningful (see e.g. Lemma~\ref{lemma:absouncomplete}). In this setting, whenever $a$ satisfies one of the two equivalent formulations of~\eqref{eq:bridge}, $a$ can be regarded as both a complete abstract domain for $(i \sqcup b^*)$ and a sound up-to technique for $(b_* \sqcap f)$.
This discussion is summarised below.
\begin{assumption}\label{assumption}
Consider:
\begin{inparaenum}[\upshape(\itshape i\upshape)]
\item a complete lattice $C$,
\item a pair of adjoint $C \galois{b^*}{b_*} C$,
\item a closure operator $a\colon C \to C$,
\item an element $i\in C$, and
\item an element $f\in Pre(a)$.
\end{inparaenum}
\end{assumption}

\begin{theorem}\label{thm:link}
Under Assumption~\ref{assumption}, if $a$ satisfies one of the two equivalent formulations of~\eqref{eq:bridge}, then $a$ is both a complete abstract domain for $(i \sqcup b^*)$ and  a sound up-to technique for $(b_* \sqcap f)$.
\end{theorem}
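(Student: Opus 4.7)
The proof essentially assembles the pieces already exposed in the paragraphs preceding the theorem, using Lemma~\ref{lemma:equivalentformulation}, Proposition~\ref{prop:mod}, Theorem~\ref{prop:compatible}, and Theorem~\ref{prop:Cousot}. I would organise it as two symmetric halves, one for the abstract interpretation side and one for the coinductive side.

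For the completeness half, I would show that $a$ is fully complete \wrt{} $i \sqcup b^*$ and then invoke Theorem~\ref{prop:Cousot}. By Lemma~\ref{lemma:equivalentformulation}$\circ 2$, full completeness amounts to $(i \sqcup b^*)\comp a \sqsubseteq a\comp (i \sqcup b^*)$. Viewing $i$ as the constant map $x \mapsto i$, Proposition~\ref{prop:mod}.3 reduces this inequation to the conjunction of $i \comp a \sqsubseteq a \comp i$ and $b^* \comp a \sqsubseteq a \comp b^*$. The first inequation, when unfolded, reads $i \sqsubseteq a(i)$ and holds because $a$ is an up-closure (Assumption~(iii)); the second is exactly the left-hand formulation of~\eqref{eq:bridge}, which we are assuming.

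For the soundness half, I would dualise the above to compatibility of $a$ \wrt{} $b_* \sqcap f$, and then invoke Theorem~\ref{prop:compatible}. By Lemma~\ref{lemma:equivalentformulation}$\bullet 2$, compatibility amounts to $a\comp (b_* \sqcap f) \sqsubseteq (b_* \sqcap f) \comp a$. Viewing $f$ as the constant map, Proposition~\ref{prop:mod}.5 reduces this to the conjunction of $a \comp b_* \sqsubseteq b_* \comp a$ and $a \comp f \sqsubseteq f \comp a$. The first inequation is exactly the right-hand formulation of~\eqref{eq:bridge}, equivalent to the hypothesis; unfolding the second gives $a(f) \sqsubseteq f$, i.e.\ $f \in Pre(a)$, which is Assumption~(v).

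There is no substantial obstacle: the content of the theorem is precisely that the two asymmetries noted in the discussion around~\eqref{fori} and~\eqref{backf} are discharged by the two standing assumptions that $a$ is an up-closure and that $f \in Pre(a)$. The only mild care needed is to treat the constant maps $i$ and $f$ correctly when applying Proposition~\ref{prop:mod}: writing out $i\comp a = i$, $a\comp i = a(i)$, $f\comp a = f$, $a\comp f = a(f)$ makes the reduction to $i \sqsubseteq a(i)$ and $a(f) \sqsubseteq f$ transparent, after which the two required fully-complete/compatible inequations become literally the two sides of~\eqref{eq:bridge}.
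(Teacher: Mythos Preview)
Your proposal is correct and matches the paper's own argument, which is given in the discussion immediately preceding the theorem (the paper does not restate it as a separate proof). You use the same decomposition via Proposition~\ref{prop:mod}.3 and~\ref{prop:mod}.5, discharge $i \sqsubseteq a(i)$ and $a(f) \sqsubseteq f$ from Assumption~\ref{assumption}(iii) and~(v) respectively, and then invoke Theorems~\ref{prop:Cousot} and~\ref{prop:compatible} exactly as the paper intends.
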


\subsection{Hopcroft and Karp from the standpoint of abstract interpretation}\label{sec:standpoint}
We now show how the Hopcroft and Karp algorithm~\cite{Aho:1974:DAC:578775,HopcroftKarp} can be seen as an instance of complete abstract interpretation, using the technology developed above.

Recall from Section~\ref{sec:HKsoundness} that $b\colon \Rel_X\to \Rel_X$ in~\eqref{eq:functional-bisim-da} can be decomposed as
$b = b_* \sqcap f$ for $b_*$ and $f$ as in~\eqref{eq:HKbf} and in \eqref{eq:test}.
The map $b_*$ has a left adjoint $b^*\colon \Rel_X \to \Rel_X$ defined for all relations $R$ as
\begin{multline}\label{eq:bstarDA}
	b^*(R) = \{(x_1',x_2') \mid \exists (x_1,x_2)\in R, a\in A \text{ such that } \\
	t(x_1)(a)=x_1' \text{ and } t(x_2)(a)=x_2' \}\enspace .
\end{multline}
To sum up we have: \( (\Rel_X,\subseteq) \galois{b^*}{b_*} (\Rel_X,\subseteq)\).\footnote{This  is similar to  the \(post\) and \(\widetilde{pre}\) operator given by \citeauthor{Cousot2000}~\cite[Example~3]{Cousot2000}}

We take $i$ as $\{(x_1,x_2)\}$, i.e., the states to prove to be language equivalent. 
By Corollary~\ref{cor:coincidence}, checking $\{(x_1,x_2)\} \sqsubseteq \mathord{\sim} = \nu (b_* \sqcap f)$ is equivalent to checking $\mu(i\sqcup b^*) \sqsubseteq f$. This inequality has a rather intuitive meaning: $\mu(i\sqcup b^*)$ is the sets of pairs of states that are ``reachable'' from the initial pair $i$. Clearly, $x_1 \sim x_2$ if{}f each of these pairs of states is in $f$.

In Section~\ref{sec:HKsoundness}, we have shown that the equivalence closure $e\colon \Rel_X \to \Rel_X$ is a sound up-to technique, by proving that $eb_* \sqsubseteq b_*e$ and $ef\sqsubseteq f$. 
By~\eqref{eq:bridge}, $b^*e \sqsubseteq eb^*$ and since $i\sqsubseteq ei$, by Proposition~\ref{prop:mod}.3, one has that $(i \sqcup b^*)e \sqsubseteq e(i \sqcup b^*)$, that is $e$ is fully complete \wrt{}  $(i \sqcup b^*)$. 
Therefore $e$ is a complete abstract domain for  $(i \sqcup b^*)$ and $f\in Pre(e)$.

\medskip

This provides a novel perspective on the Hopcroft and Karp's algorithm \cite{Aho:1974:DAC:578775}. Its correctness can be established using the least fixed-point of the function $b^*\sqcup i$ abstracted to the lattice of equivalence relations $ERel_X$. We denote this function by $\overline{b^*\sqcup i}^e\colon ERel_X \to ERel_X$ and $(Rel_X,\subseteq) \galoiS{\alpha}{\gamma} (ERel_X,\subseteq)$ the pair of adjoint associated to $e$. The map $\alpha$ assigns to every relation its equivalence closure; $\gamma$ is just the obvious injection. The function $\overline{b^*\sqcup i}^e$ is given by $\alpha \comp (b^*\sqcup i) \comp \gamma$. The algorithm returns true if{}f $\mu (\overline{b^*\sqcup i}^e) \sqsubseteq f$ as we show later on.

This leads to a slightly different algorithm, illustrated in Fig~\ref{fig:HKAI}, than the one discussed in Section~\ref{ssec:HK}: during the while loop \texttt{(3)}, it is not checked whether $o(x_1')\neq o(x_2')$ (step \texttt{(3.3)} in Fig~\ref{fig:naive}), but this is done only at the very end for the computed relation $R$ (step \texttt{(4)} in Fig~\ref{fig:HKAI}). Moreover, after every iteration of the while loop in Fig~\ref{fig:HKAI}, $R$ is an equivalence relation, while in the other algorithm $R$ is always a mere relation.
\begin{remark}
	The original algorithm by Hopcroft and Karp \cite{HopcroftKarp,Aho:1974:DAC:578775} is actually the one in Fig~\ref{fig:HKAI}: indeed, they use the so called \emph{union-find} data structure for the \emph{equivalence} relation $R$ and they check containment in $f$ only at the end.
\end{remark}

\begin{figure}[t]
\centering
\underline{$\texttt{HK}$ $(x_1,x_2)$}
\begin{codeNT}
(1) $R := \emptyset$; $todo := \emptyset$
(2) insert $(x_1,x_2)$ into $todo$
(3) while $todo$ is not empty do 
   (3.1)  extract $(x_1',x_2')$ from $todo$
   (3.2)  if $(x_1',x_2')\in R$ then goto (3)
   (3.3)  for all $a\in A$, 
	           insert $(t(x_1')(a),\,t(x_2')(a))$ into $todo$
   (3.4)  insert $(x_1',x_2')$ into $R$ 
   (3.5) $R:= e(R)$;
(4) return $R \subseteq f$; 
\end{codeNT}
\caption{Hopcroft and Karp's algorithm \cite{Aho:1974:DAC:578775}.}
\label{fig:HKAI}
\end{figure}

To be completely formal, we must say that the algorithm does not compute exactly the chain for $\mu (\overline{b^*\sqcup i}^e)$.
\[\bot \sqsubseteq \overline{b^*\sqcup i}^e(\bot) \sqsubseteq \overline{b^*\sqcup i}^e(\overline{b^*\sqcup i}^e(\bot)) \sqsubseteq \cdots \]
Indeed, at every iteration of the algorithm, only \emph{one} pair of states is removed from $todo$ and inserted into $R$, while in the above chain \emph{many} pairs are added at the same time. However, the final result, i.e., the relation $R$ at step $\texttt{(4)}$, hereafter denoted as $R^\texttt{(4)}$ is exactly $\mu (\overline{b^*\sqcup i}^e)$. From this fact and the fact the $e$ is a complete abstract domain for $b^*\sqcup i$ it follows that the algorithm is sound and complete.

In order to prove that $R^\texttt{(4)} = \mu (\overline{b^*\sqcup i}^e)$, we first show that $R^\texttt{(4)}$ is a fixed point of $\overline{b^*\sqcup i}^e$.
Observe that at step $\texttt(3)$, it always holds that
\begin{equation}\label{invHK} 
\overline{b^*\sqcup i}^e(R)=e(R\sqcup todo)\enspace .
\end{equation}
This is true after step \texttt{(2)}: $R=\emptyset$ and $todo=i$, so $\overline{b^*\sqcup i}^e(\emptyset)=e(b^*(\emptyset) \sqcup i) = e(i)=e(\emptyset \sqcup todo)$.
At any iteration, a pair $(x_1',x_2')$ is removed from \(todo\) and, if it already belongs to $R$, the control comes back to \texttt{(3)}: in this case~\eqref{invHK} is not modified.
If it does not, $(x_1',x_2')$ is inserted in $R$ and exactly $b^*(\{(x_1',x_2')\})$ is inserted in $todo$: in this case we need to check that
\begin{equation}\label{invHK2}
\overline{b^*\sqcup i}^e(R\sqcup \{(x_1',x_2')\})=e(R\sqcup todo \sqcup b^*(\{(x_1',x_2')\}))
\end{equation} 
It is easy to see that~\eqref{invHK} entails~\eqref{invHK2}:
\begin{align*}
	& \overline{b^*\sqcup i}^e(R\sqcup \{(x_1',x_2')\}) \\
	=& e(b^*\sqcup i( R\sqcup \{(x_1',x_2') \}) ) & \text{by definition of }\overline{b^*\sqcup i}^e\\
	=& e( b^*\sqcup i( R) \sqcup b^*(\{(x_1',x_2')\})) & b^* \text{ is a left adjoint}\\
	=& e( e(R \sqcup todo) \sqcup b^*(\{(x_1',x_2')\})) & \text{ by }~\eqref{invHK} \\
	=& e( R \sqcup todo \sqcup b^*(\{(x_1',x_2')\}) )  & e \text{ is a closure}
\end{align*}

Now, at step $\texttt{(4)}$, $todo$ is empty and, thus by~\eqref{invHK}, we have that $\overline{b^*\sqcup i}^e(R^{\texttt{(4)}})=e(R^{\texttt{(4)}})=R^{\texttt{(4)}}$. This proves that $R^{\texttt{(4)}}$ is a fixed point of $\overline{b^*\sqcup i}^e$.

To prove that $R^{\texttt{(4)}}= \mu \overline{b^*\sqcup i}^e$, is now enough to show that $R^{\texttt{(4)}}\sqsubseteq \mu \overline{b^*\sqcup i}^e$. Let $R_j$ and $todo_j$ be the relations at step \texttt{(3)} at the $j$-th iteration. Then, a simple inductive argument confirms that $todo^j \sqsubseteq (\overline{b^*\sqcup i}^e)^{j+1}(\bot)$ and $R_j \sqsubseteq (\overline{b^*\sqcup i}^e)^j(\bot)$ . Therefore $R^{\texttt{(4)}}=\bigsqcup R_j \sqsubseteq \bigsqcup{(\overline{b^*\sqcup i}^e)^j}(\bot) = \mu  (\overline{b^*\sqcup i}^e)$.

\subsection{The domain of signs as an up-to technique}\label{sec:signupto} 

Recall the toy program from Section~\ref{sec:toyprogramanalysis}. One needs to check whether $\mu (i \sqcup b^*) \sqsubseteq f$ where $i$ and $b^*$ are as in~\eqref{eq:toyif} and $f=[0,\infty)$. The right adjoint of $b^*$ is $b_*\colon Pred_Z\to Pred_Z$ defined for all predicates $P$ as
\begin{equation}\label{eq:signcoind}b_*(P) = \bigcup \{Q \mid b^*(Q)\subseteq P\} = ( (-\infty,0]\cup P) \oplus\!1
\end{equation}
where $\oplus \!1 \colon Pred_Z \to Pred_Z$ is defined as $P\oplus \!1 = \{i+1\mid i\in P  \}$.

Thanks to Corollary~\ref{cor:coincidence}, rather than checking $\mu (i \sqcup b^*) \sqsubseteq f$, one can check $i \sqsubseteq \nu (b_* \sqcap f)$. The latter can be proved by means of coinduction: one has  to find a predicate $P$ such that $\{5\} \subseteq P \subseteq b_*(P) \cap [0,\infty)$. For instance, by taking $P = \{5,4,3\}$, one has $b_*(P) = (-\infty,1] \cup \{6,5,4\}$ and $b_*(P) \cap [0,\infty) = \{6,5,4,1,0\}$. 
Therefore the inclusion does not hold. In order to find a $(b_* \sqcap f)$-simulation $P$, one can take the least fixed point computed in~\eqref{eq:lfpcomp}, that is  $P= \{5,4,3,2,1,0\}$.

\medskip

One can also reason, more effectively, up-to the abstract domain of signs $s$ (Section~\ref{ssec:AItoy}). In this case, $\{5\}$ itself is a $(b_* \sqcap f)$-simulation up to $s$. Indeed, $s(\{5\}) = [1,\infty)$ and  $ b_*[1,\infty) \cap f = ( ( (-\infty, 0] \cup [1,\infty) ) \oplus \! 1)  \cap [0,\infty) = [0,\infty)$. Obviously $\{5\}\subseteq [0,\infty)$. 

To make this a valid proof, one should show first that $s$ is a sound up-to technique. Unfortunately~\eqref{eq:notfc} and~\eqref{eq:bridge} inform us that $sb_* \not \sqsubseteq b_* s$, namely $s$ is not $b_*$-compatible. Note that this does not entail that  $s$ is not $(b_*\sqcap f)$-compatible, but by taking $x=\{-3\}$ one can easily verify that this is the case, i.e., 
\[
s(b_*\sqcap f) \not \sqsubseteq (b_*\sqcap f)s\enspace .
\] 
Nevertheless, $s$ is sound \wrt{}  $(b_*\sqcap f)$: we will show this in Section~\ref{sec:companion}.

\section{Intermezzo}\label{sec:intermezzo}
Before continuing with the next achievements, we make two small detours to settle down the concepts seen so far.

\subsection{A counterexample to the correspondence of soundness and completeness}\label{sec:counter}
In Section~\ref{sec:bridge}, we have shown that the conditions in~\eqref{eq:bridge} entails both soundness of up-to techniques and completeness of abstract domains.
The reader may wonder whether, more generally, it is the case that an up-to technique is sound if{}f it is a complete abstract domain.

More formally, given Assumption~\ref{assumption} is it the case that $a$ is $(b_*\sqcap f)$-sound (as an up-to technique) if{}f it is $(i \sqcup b^*)$-complete (as an abstract domain)?

The answer is no.  Consider the following lattice, with $b^*$ defined by the dashed lines on the left and $b_*$ defined by the dotted lines on the right. It is easy to check that they are adjoint. Take $i=1$ and $f=4$. 
\[
\begin{array}{c}
{\lower-1.2cm\hbox{\xymatrix@R=0.3cm@C=0.3cm{  
\top \ar@{-->}@(ul,dl) \ar@{.>}@(ur,dr)  \\
4 \ar@{-->}@(ul,dl) \ar@{-}[u] \ar@{.>}@(ur,dr)   \\
3 \ar@{-->}@(ul,dl) \ar@{-}[u]  \ar@{.>}@(ur,dr)  \\
2\ar@{-}[u] \ar@{-->}@(l,l)[u] \ar@{.>}@(r,r)[d] \\
1\ar@{-}[u] \ar@{-->}@(l,l)[d] \ar@{.>}@(ur,dr)  \\
\bot \ar@{-->}@(ul,dl) \ar@{-}[u] \ar@{.>}@(r,r)[u]
}}}\end{array}
\]

Let $a$ be the up-closure  such that $Pre(a)= \{\top, 4,3,2\}$.
Then $\mu (i \sqcup b^*) = 1$ and $a(\mu (i \sqcup b^*) ) =2$.
Instead $\mu a (i \sqcup b^*)= 3$. Indeed,
\[\bot \sqsubseteq a (i \sqcup b^*)(\bot)= 2 \sqsubseteq a (i \sqcup b^*)(2)=3 \sqsubseteq a (i \sqcup b^*)(3)=3\enspace .\]
Therefore, by Lemma~\ref{lemma:alternativecompleteness}, $a$ is not complete \wrt{}  $i \sqcup b^*$.
However, $a$ is $(b_*\sqcap f)$-sound. Indeed $\nu (b_*\sqcap f)$ is computed as 
\begin{equation}\label{eq:counterexOmega}
\top \sqsupseteq 4 \sqsupseteq 4
\end{equation}
and, similarly, $\nu (b_*\sqcap f)a = 4$.

\subsection{Duality}\label{sec:duality}
The reader may have got the feeling that coinduction up-to and abstract interpretation are somehow the dual of each other. This is not the case: first, coinduction up-to is a proof technique, exploiting the Knaster-Tarski fixed point theorem, while abstract interpretation is a computational method relying on Kleene's theorem; second both abstract interpretation and coinduction up-to use as enhancement an up-closure $a\colon C \to C$, while their duals should use down-closures. The latter is explained in
some details, below.

 The dual of the coinduction up-to looks like
\begin{equation}\label{eq:inductionuptoproofprinciple}
 \begin{array}{c}
    \exists y, \;  ba(y) \sqsubseteq y\sqsubseteq x \\
    \hline 
    \mu b \sqsubseteq x
\end{array}
\end{equation}
When $a$ is an up-closure, this principle does not provide any enhancement \wrt{}  standard induction (see~\eqref{eq:coinductionproofprinciple}, left): indeed, if 
$ba(y) \sqsubseteq y$, then also $b(y) \sqsubseteq y$. Instead, when $a$ is an down-closure, the principle might be meaningful.

\medskip

The dual of abstract interpretation consists in checking $\nu b\sqsubseteq f$ by optimising somehow the computation of the chain of $\nu b$ in the right of~\eqref{eq:initfinsequences}.
Interestingly enough, all the elements of this chains already belongs to the domain $Pre(a)$, whenever $a$ is a fully-complete up-closure.

\begin{proposition}\label{prop:dual}
	Let $C \galois{b^*}{b_*} C$ and $i,f\in C$. Let $a\colon C\to C$ be an up-closure fully complete \wrt{} $b^*$. Assume moreover that $a(f)\sqsubseteq f$. For all $k$, \[a(b_*\sqcap f)^k(\top) \sqsubseteq (b_*\sqcap f)^k(\top)\enspace .\]
\end{proposition}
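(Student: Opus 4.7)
My first observation is that since $a$ is an up-closure we always have $x\sqsubseteq a(x)$, so the inequation in the statement is actually an \emph{equality}: proving it amounts to showing that the whole Kleene chain $(b_*\sqcap f)^k(\top)$ of the greatest-fixed-point iteration lies in $Pre(a)$. This suggests an induction on $k$, where the invariant is ``$(b_*\sqcap f)^k(\top)\in Pre(a)$''.

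The preparatory algebraic step is to convert full completeness of $a$ \wrt{} $b^*$ into a statement about $b_*\sqcap f$. By definition (item $\circ 2$ of Lemma~\ref{lemma:equivalentformulation}), full completeness of $a$ \wrt{} $b^*$ means $b^*a \sqsubseteq ab^*$, and by the adjunction equivalence~\eqref{eq:bridge} this is the same as $ab_* \sqsubseteq b_*a$. The hypothesis $a(f)\sqsubseteq f$, once $f$ is read as the constant function at $f\in C$, becomes exactly the pointwise inequation $af\sqsubseteq fa$ (the right-hand side is constant equal to $f$). Hence Proposition~\ref{prop:mod}.5, applied with $g:=a$, $h_1:=b_*$, $h_2:=f$, yields
\[
  a \comp (b_*\sqcap f) \;\sqsubseteq\; (b_*\sqcap f) \comp a \, .
\]

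With this in hand the induction is immediate. For $k=0$ the claim reduces to $a(\top)\sqsubseteq \top$, which is trivial. For the inductive step, set $y_k := (b_*\sqcap f)^k(\top)$ and assume $a(y_k)\sqsubseteq y_k$; together with $y_k\sqsubseteq a(y_k)$ this gives $a(y_k)=y_k$, so $y_k\in Pre(a)$. Then, using the displayed inequation and monotonicity,
\[
  a(y_{k+1}) \;=\; a\bigl((b_*\sqcap f)(y_k)\bigr) \;\sqsubseteq\; (b_*\sqcap f)\bigl(a(y_k)\bigr) \;=\; (b_*\sqcap f)(y_k) \;=\; y_{k+1}\, ,
\]
closing the induction. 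I do not anticipate any real obstacle: the only non-mechanical step is the reorganisation of the hypotheses via~\eqref{eq:bridge} and Proposition~\ref{prop:mod}.5, after which the induction is essentially automatic.
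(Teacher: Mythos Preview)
Your proof is correct and follows essentially the same approach as the paper's: both argue by induction on $k$, using $ab_*\sqsubseteq b_*a$ (obtained from full completeness via~\eqref{eq:bridge}) together with $a(f)\sqsubseteq f$ to push $a$ through $(b_*\sqcap f)$ in the inductive step. The only cosmetic difference is that you first package these two facts into $a(b_*\sqcap f)\sqsubseteq(b_*\sqcap f)a$ via Proposition~\ref{prop:mod}.5 and then run a clean induction, whereas the paper performs the same splitting inline within the inductive step.
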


This provides an explanation for what we anticipated in Remark~\ref{rmk:hopcroft}. Indeed, we have seen in Section~\ref{sec:HKsoundness} that the equivalence closure $e\colon \Rel_X \to \Rel_X$ is fully complete \wrt{}  $b_*$ in~\eqref{eq:HKbf}; the above proposition states that all the elements of the chain~\eqref{eq:initfinsequences} for computing $\nu (b_*\sqcap f)$ are in $Pre(e)$, i.e., they are equivalence relations.  

It is worth to conclude this detour on duality, by remarking that while abstract interpretation and coinduction up-to naturally emerges in logics, computer science and related fields, their duals, exploiting a down-closure operator, are far less common.

\section{The companion}\label{sec:companion}
In Section~\ref{sec:signupto}, we have seen that the domain of signs $s$  is not compatible \wrt{}  $(b_*\sqcap f)$. Nevertheless, we will see at the end of this section that $s$ is sound.
The strategy that we are going to use to prove this fact exploits recent developments in up-to techniques~\cite{hur2013power,pous2016coinduction,parrow2016largest} that, in the next section we will transfer to abstract interpretation. The proof strategy is based on the following observations:
\begin{enumerate}
\item The class of sound up-to techniques is downward closed: if $a_1\sqsubseteq a_2$ and $a_2$ is $b$-sound, then also $a_1$ is $b$-sound.
\item Fixed a $b$, there exists a greatest $b$-compatible up-to technique $\omega_b$, which \citeauthor{pous2016coinduction}~\cite{pous2016coinduction} calls \emph{the companion}.
\end{enumerate}
Therefore, rather than proving that a certain up-to technique $a$ is compatible, to show the soundness of $a$ is enough to prove that $a \sqsubseteq \omega_b$. 
This is extremely useful because there are many techniques which are not compatible, but still they are below the companion (and thus sound), like for instance the domain of signs from Section~\ref{sec:signupto} or many of the so called respectful techniques \cite{San98MFCS} which are common in process calculi and GSOS specifications.

Interestingly enough, $\omega_b$ is an up-closure also when one considers as up-to techniques arbitrary monotone maps, rather than just up-closure operators (see Lemma 3.2 \cite{pous2016coinduction}).
This fact allows us to give an alternative characterisation of $\omega_b$ as an abstract domain $\Omega_b$ which we found suggestive and useful (at least in our examples), but that can be easily derived from the results of \citeauthor{pous2016coinduction}~\cite{pous2016coinduction}. We first need the following well-known lemma from~\cite{Ward42}.

\begin{lemma}\label{lemma:inclusion}
Let $a_1,a_2 \colon C \to C$ be two up-closures.
$a_1\sqsubseteq a_2$ if{}f $Pre(a_2) \subseteq Pre(a_1)$.
\end{lemma}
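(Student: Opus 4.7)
The plan is a short two-direction argument exploiting only that each $a_i$ is inflationary, idempotent, and monotone; nothing deeper is needed, and I do not expect any real obstacle.

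For the forward direction, I would assume $a_1 \sqsubseteq a_2$ pointwise and pick $x \in Pre(a_2)$, so $a_2(x) \sqsubseteq x$. Then $a_1(x) \sqsubseteq a_2(x) \sqsubseteq x$ directly gives $x \in Pre(a_1)$, hence $Pre(a_2) \subseteq Pre(a_1)$. This half uses nothing about $a_1,a_2$ beyond the pointwise inequality.

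For the converse, assume $Pre(a_2) \subseteq Pre(a_1)$ and fix an arbitrary $x \in C$. The key observation is that, since $a_2$ is an up-closure, $a_2(x)$ is a fixed point of $a_2$ (equivalently, it lies in $Pre(a_2)$, because $a_2 a_2(x) \sqsubseteq a_2(x)$). By the hypothesis, $a_2(x) \in Pre(a_1)$, so $a_1(a_2(x)) \sqsubseteq a_2(x)$. Finally, $x \sqsubseteq a_2(x)$ (inflation of $a_2$) together with monotonicity of $a_1$ yields
\[
a_1(x) \sqsubseteq a_1(a_2(x)) \sqsubseteq a_2(x),
\]
and since $x$ was arbitrary, $a_1 \sqsubseteq a_2$.

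In short, the argument is really the standard bijective correspondence between up-closure operators and their lattices of (pre-)fixed points which was recalled in Section~\ref{sec:preliminaries}: the map $a \mapsto Pre(a)$ is an order-reversing bijection, and the two implications above spell out exactly that reversal. The only point worth emphasising is the use of $a_2 a_2 = a_2$ in the converse direction, which is what lets us feed $a_2(x)$ into the hypothesis.
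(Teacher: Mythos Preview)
Your proof is correct and matches the paper's own argument essentially line by line: the forward direction uses the pointwise inequality to transfer pre-fixed points, and the converse feeds $a_2(x)\in Pre(a_2)\subseteq Pre(a_1)$ together with inflation of $a_2$ and monotonicity of $a_1$ to obtain $a_1(x)\sqsubseteq a_1(a_2(x))\sqsubseteq a_2(x)$.
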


\begin{theorem}\label{thmcompanion}
Let $b\colon C \to C$ be a Scott cocontinuous map\footnote{The assumption of Scott co-continuity can be removed to the price of a more elaborated characterization of $\Omega_b$. } and let $\omega_b\colon C \to C$ be the closure operator associated to the sublattice $\Omega_b$, defined as follows. \[\top \sqsupseteq b(\top)\sqsupseteq bb(\top) \sqsupseteq \dots \sqsupseteq \nu b\]
Then $\omega_b$ is the greatest $b$-compatible map, that is 
\begin{inparaenum}[\upshape(\itshape 1\upshape)]
  \item $\omega$ is compatible and
	\item if $a$ is compatible, then $a\sqsubseteq \omega_b$.
\end{inparaenum}
Moreover, (3) for any up-closure $a$, $a\sqsubseteq \omega_b$ if{}f $\Omega_b \subseteq Pre(a)$.
 \end{theorem}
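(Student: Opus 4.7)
My plan is to prove the three claims in the order (3), then (1), then (2), because (3) reduces to Lemma~\ref{lemma:inclusion} and then serves as the main tool for (2). Throughout I will use the two basic properties of $\Omega_b$: it is closed under meets (hence $\omega_b$ is a well-defined closure with $\Pred(\omega_b) = \Omega_b$) and, more importantly, closed under the action of $b$. The closure under $b$ is immediate: if $y = b^{i}(\top)$ then $b(y) = b^{i+1}(\top) \in \Omega_b$, and $b(\nu b) = \nu b \in \Omega_b$. Scott cocontinuity guarantees $\nu b = \bigsqcap_{i\in\Nat} b^{i}(\top)$, so the chain together with $\nu b$ is the whole of $\Omega_b$.

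For (3), I would simply invoke Lemma~\ref{lemma:inclusion}: it gives $a \sqsubseteq \omega_b$ if{}f $\Pred(\omega_b) \subseteq \Pred(a)$, and by construction $\Pred(\omega_b) = \Omega_b$. For (1), I would use the characterisation $\bullet 3$ of Lemma~\ref{lemma:equivalentformulation}: compatibility amounts to the existence of a monotone $\overline{b}\colon \Omega_b \to \Omega_b$ with $\gamma\,\overline{b} = b\,\gamma$, where $\gamma$ is the inclusion. Since $b$ preserves $\Omega_b$, the restriction $\overline{b} := b\!\upharpoonright_{\Omega_b}$ is such a map, which immediately gives $\omega_b b \sqsubseteq b \omega_b$.

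For (2), given a compatible up-closure $a$, by (3) it suffices to prove $\Omega_b \subseteq \Pred(a)$. The base case $\top \in \Pred(a)$ is free ($a(\top) = \top$ since $a$ is an up-closure). The inductive step: if $y \in \Pred(a)$, then compatibility and monotonicity give $a(b(y)) \sqsubseteq b(a(y)) \sqsubseteq b(y)$, so $b(y)\in \Pred(a)$. This places every $b^{i}(\top)$ in $\Pred(a)$. For $\nu b$, I would use that any monotone map satisfies $a(\bigsqcap_i x_i) \sqsubseteq \bigsqcap_i a(x_i)$, so
\[
a(\nu b) \;=\; a\!\left(\bigsqcap_{i\in\Nat} b^{i}(\top)\right) \;\sqsubseteq\; \bigsqcap_{i\in\Nat} a(b^{i}(\top)) \;\sqsubseteq\; \bigsqcap_{i\in\Nat} b^{i}(\top) \;=\; \nu b,
\]
which puts $\nu b$ in $\Pred(a)$ as well.

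The only delicate point is the handling of $\nu b$: if $b$ were not Scott cocontinuous, the induction over the chain would need to be carried out transfinitely, which is why the hypothesis on $b$ is used precisely there (as the footnote in the statement anticipates). Everything else is routine chasing through the equivalences of Lemma~\ref{lemma:equivalentformulation} and the Galois-insertion/closure correspondence recalled in Section~\ref{sec:preliminaries}.
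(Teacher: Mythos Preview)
Your proof is correct and follows essentially the same route as the paper: both establish (1) via characterisation~$\bullet 3$ of Lemma~\ref{lemma:equivalentformulation} (closure of $\Omega_b$ under $b$), (2) by showing every $b^i(\top)$ and then $\nu b$ lies in $Pre(a)$ and invoking Lemma~\ref{lemma:inclusion}, and (3) directly from Lemma~\ref{lemma:inclusion}. The only cosmetic differences are the order in which you treat the three claims and your more explicit handling of the meet for $\nu b$; the paper simply says ``$Pre(a)$ is a complete lattice'' where you spell out $a(\bigsqcap_i x_i)\sqsubseteq \bigsqcap_i a(x_i)$.
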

The theorem helps in understanding the difference between being compatible and being below the companion. By point $\bullet$3 of Lemma~\ref{lemma:equivalentformulation}, $a$ is compatible if{}f $(A=)Pre(a)$ is closed by $b$, that is for all $x\in Pre(a)$, $b(x)\in Pre(a)$. Being below the companion means instead that just $\Omega_b$ should be included into $Pre(a)$. This latter condition is obviously much weaker, but still is enough to entail soundness.
\begin{corollary}\label{cor:silly}
If $b^i(\top)\in Pre(a)$ for all $i \in \Nat$, then $a$ is $b$-sound.
\end{corollary}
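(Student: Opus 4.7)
The plan is to chain three already-established facts: (1) that $\Omega_b \subseteq Pre(a)$ forces $a$ below the companion $\omega_b$, (2) that the companion is compatible and hence sound, and (3) that soundness is downward closed. So from the hypothesis we conclude $a \sqsubseteq \omega_b$, and then soundness of $a$ transfers from soundness of $\omega_b$.

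More concretely, first I would unpack the definition of $\Omega_b$ from Theorem~\ref{thmcompanion}: it consists of the descending Kleene chain $\top \sqsupseteq b(\top) \sqsupseteq bb(\top) \sqsupseteq \dots$ together with $\nu b$. The hypothesis gives immediately that every $b^i(\top)$ lies in $Pre(a)$. To conclude that $\nu b \in Pre(a)$ as well, I would invoke the fact that $Pre(a)$, for an up-closure $a$, is closed under arbitrary meets in $C$, together with Kleene's theorem (applicable since $b$ is assumed Scott cocontinuous in Theorem~\ref{thmcompanion}) which expresses $\nu b = \bigsqcap_i b^i(\top)$. Hence $\Omega_b \subseteq Pre(a)$.

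Next, by point (3) of Theorem~\ref{thmcompanion}, the inclusion $\Omega_b \subseteq Pre(a)$ is equivalent to $a \sqsubseteq \omega_b$. By point (1) of the same theorem, $\omega_b$ is $b$-compatible, and by Theorem~\ref{prop:compatible} $b$-compatibility implies $b$-soundness, so $\omega_b$ is $b$-sound. Finally, invoking the downward closure of sound up-to techniques (observation~1 at the beginning of Section~\ref{sec:companion}), from $a \sqsubseteq \omega_b$ and $b$-soundness of $\omega_b$ we deduce $b$-soundness of $a$, which is the conclusion.

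I do not expect any real obstacle: the only subtlety is verifying that $\nu b \in Pre(a)$, which is immediate from Scott cocontinuity of $b$ and the meet-closure of $Pre(a)$. The rest is direct application of Theorem~\ref{thmcompanion} and the soundness-downward-closure principle.
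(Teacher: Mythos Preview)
Your proposal is correct and follows precisely the route the paper intends: the corollary is stated immediately after Theorem~\ref{thmcompanion} as a direct consequence, and the paper's implicit argument is exactly the chain you spell out---verify $\Omega_b \subseteq Pre(a)$, apply point~(3) of the theorem to get $a \sqsubseteq \omega_b$, then use compatibility of $\omega_b$ (point~(1)) together with Theorem~\ref{prop:compatible} and the downward closure of soundness (observation~(1) at the start of Section~\ref{sec:companion}). Your explicit handling of $\nu b \in Pre(a)$ via meet-closure of $Pre(a)$ and Scott cocontinuity is the right way to fill in the one detail the paper leaves tacit.
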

This corollary  is not particularly useful to prove soundness, since the premise is often hard to check. However, this is enough for the purposes of our paper. The condition of being below the companion could be better checked by defining the companion itself as the greatest fixed point of a certain ``second order'' operator and then use again coinduction. We stop here, as this goes beyond the scope of this paper, and we refer the interested reader to the work of \citeauthor{pous2016coinduction}~\cite{pous2016coinduction}. It is however important to remark here that, in Section~\ref{sec:loccoind}, we will give a coinductive characterization for an analogous of the companion in the context of abstract interpretation.

\begin{example}
We conclude this section, by showing that the domain of signs $s$ is a sound up-to technique for $(b_*\sqcap f)$. In this case, it is easy to compute $\Omega_b$: 
\begin{equation*}
Z  \sqsupseteq [0,\infty) 
\end{equation*}
Since this is included into $Pre(s)$, which is the domain in Section~\ref{ssec:AItoy}, then by Corollary~\ref{cor:silly}, $s$ is sound. Note instead that the domain of signs $Pre(s)$ is \emph{not} closed under $b_*\sqcap f$: this means exactly that $s$ is not $b_*\sqcap f$-compatible.
\end{example}

\begin{remark}
	The existence of the  smallest abstract domain (or equivalently by virtue of Lemma \ref{lemma:inclusion} the greatest up-closure) that is fully complete \wrt{} $i\sqcup b^*$ is irrelevant for abstract interpretation because it is always  the abstract domain containing only $\top$ that, obviously, does not contain the property $f$ which needs to be checked. However, it makes sense to look for, amongst all the abstract domains containing $f$, the smallest fully complete one. The $f$-companion that we will introduce in the next section is the smallest abstract domain (or equivalently the largest up-closure) containing $f$ that is fully complete \wrt{} $b^*$ (by Proposition \ref{prop:mod}.3 this is also fully complete \wrt{} $i \sqcup b^* $).
\end{remark}

\section{Local Completeness}\label{sec:localcompleteness}
Inspired by up-to techniques, we give a novel definition of completeness, called local completeness. This notion is strictly weaker than completeness, but still is sufficient to solve the
original problem of program analysis, namely to check whether $\mu b \sqsubseteq f$ for a given property $f$ and predicate transformer $b$. 

\begin{definition}
Let $C$ be a complete lattice, $b\colon C \to C$ be a monotone map and $f\in C$. We say that an up-closure $a\colon C \to C$ is \emph{local complete}, or \emph{$(b,f)$-complete}, if{}f
\begin{inparaenum}[\upshape(\itshape 1\upshape)]
  \item  $a(f)\sqsubseteq f $  and 
	\item  $ \mu (a b) \sqsubseteq f \text{ if{}f } \mu b \sqsubseteq f$.
\end{inparaenum}
\end{definition}

Our interest in $(b,f)$-completeness is justified by the following result, stating that, rather than checking $\mu b \sqsubseteq_C f$, one can safely lift $b$ to the abstract domain $A=Pre(a)$ and check whether $ \mu  \overline{b}^a \sqsubseteq_A f$.
\begin{proposition}\label{prop:localimplies}
If $a$ is $(b,f)$-complete, then  $ \mu  \overline{b}^a \sqsubseteq_A f$ if{}f  $\mu b \sqsubseteq_C f$.
\end{proposition}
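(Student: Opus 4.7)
My plan is to route the equivalence through the auxiliary fixed point $\mu(ab)$ in $C$, since $(b,f)$-completeness is stated in terms of it, whereas the target statement is about $\mu\overline{b}^a$ in $A$. The bridge I need is the identity
\[
\mu\overline{b}^a \;=\; \alpha\bigl(\mu(ab)\bigr),\qquad\text{equivalently}\qquad \gamma\bigl(\mu\overline{b}^a\bigr)\;=\;\mu(ab).
\]
Given this, the proof concludes quickly. Indeed, since $a(f)\sqsubseteq f$ (first clause of $(b,f)$-completeness), we have $f\in Pre(a)=A$, so Lemma~\ref{lemma:absouncomplete}(1) applies to $\mu(ab)\in C$ and yields
\[
\mu(ab)\sqsubseteq_C f \iff \alpha\bigl(\mu(ab)\bigr)\sqsubseteq_A f \iff \mu\overline{b}^a\sqsubseteq_A f,
\]
and combining this chain with the second clause of $(b,f)$-completeness, $\mu(ab)\sqsubseteq_C f \iff \mu b\sqsubseteq_C f$, gives exactly $\mu\overline{b}^a\sqsubseteq_A f \iff \mu b\sqsubseteq_C f$.

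The core step is therefore establishing the bridging identity. I plan to prove it by a direct two-sided fixed-point argument, using that $\gamma$ is the inclusion of $A=Pre(a)$ and that $\alpha\gamma=id_A$ and $\gamma\alpha=a$. First I would show $\alpha(\mu(ab))$ is a fixed point of $\overline{b}^a$: write $y:=\mu(ab)$, observe that $y=ab(y)=a(ab(y))$ forces $y\in Fix(a)\subseteq A$, hence $\gamma\alpha(y)=y$; then
\[
\overline{b}^a(\alpha(y))=\alpha b\gamma\alpha(y)=\alpha b(y)=\alpha(ab(y))=\alpha(y),
\]
using $a=\gamma\alpha$ and $\alpha\gamma\alpha=\alpha$. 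Next I would show minimality: if $z\in A$ satisfies $\overline{b}^a(z)\sqsubseteq z$, i.e.\ $\alpha b\gamma(z)\sqsubseteq z$, then applying the monotone $\gamma$ gives $ab\gamma(z)\sqsubseteq\gamma(z)$, so $\gamma(z)$ is a pre-fixed point of $ab$ in $C$; hence $y=\mu(ab)\sqsubseteq\gamma(z)$, and applying $\alpha$ yields $\alpha(y)\sqsubseteq\alpha\gamma(z)=z$.

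The main (small) obstacle I foresee is purely bookkeeping around the Galois insertion identities ($a=\gamma\alpha$, $\alpha\gamma=id_A$, $Pre(a)=A$), in particular checking that $\mu(ab)$ really lies in $A$ so that the identity $\gamma\alpha(\mu(ab))=\mu(ab)$ is available; once that is in place, the two fixed-point inequalities above are essentially one-line calculations and the rest of the proof is just citing Lemma~\ref{lemma:absouncomplete}(1) and unfolding the definition of $(b,f)$-completeness.
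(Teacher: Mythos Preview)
Your proposal is correct and follows essentially the same route as the paper: both reduce the statement to the equivalence $\mu\overline{b}^a\sqsubseteq_A f \iff \mu(ab)\sqsubseteq_C f$ (which the paper isolates as Lemma~\ref{lemma:localmultiple}.3, proved via the identity $\gamma(\mu\overline{b}^a)=a(\mu(ab))$ and Lemma~\ref{lemma:absouncomplete}.1), and then chain it with the defining clause $\mu(ab)\sqsubseteq f \iff \mu b\sqsubseteq f$. Your inline fixed-point argument for the bridging identity is the same calculation the paper carries out in the appendix; note that your formulation $\gamma(\mu\overline{b}^a)=\mu(ab)$ agrees with the paper's $\gamma(\mu\overline{b}^a)=a(\mu(ab))$ precisely because, as you observe, $\mu(ab)\in Fix(a)$.
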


We named $(b,f)$-completeness also local completeness since, as illustrated by the following result, it is similar to completeness but localised at $f$.
\begin{proposition}\label{prop:completeifflocal}
Let $C$ be a complete lattice, $b,a\colon C \to C$ be a monotone map and a an up-closure.
Then:
\begin{center}
$a$ is $b$-complete \\ if{}f \\ for all $f\in Pre(a)$, $a$ is $(b,f)$-complete. 
\end{center}
\end{proposition}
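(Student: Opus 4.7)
The plan is to use Lemma~\ref{lemma:alternativecompleteness} to reformulate $b$-completeness as the single equation $a(\mu b) = \mu(ab)$, and then prove each direction against the two clauses of $(b,f)$-completeness. The forward direction is essentially bookkeeping; the backward direction hinges on one clever instantiation of $f$.

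For the forward direction, assume $a(\mu b) = \mu(ab)$ and pick any $f \in Pre(a)$. Clause (1) of local completeness, namely $a(f)\sqsubseteq f$, holds by hypothesis. For clause (2), the implication $\mu b \sqsubseteq f \Rightarrow \mu(ab) \sqsubseteq f$ follows from monotonicity of $a$ together with $a(f)\sqsubseteq f$, via the chain $\mu(ab) = a(\mu b) \sqsubseteq a(f) \sqsubseteq f$. The converse $\mu(ab) \sqsubseteq f \Rightarrow \mu b \sqsubseteq f$ uses that $a$ is an up-closure: $\mu b \sqsubseteq a(\mu b) = \mu(ab) \sqsubseteq f$.

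For the backward direction, I first note the inequality $a(\mu b) \sqsubseteq \mu(ab)$, which holds for every monotone $b$ and every closure $a$: by idempotence $a(\mu(ab)) = \mu(ab)$, and from $b(\mu(ab)) \sqsubseteq a b(\mu(ab)) = \mu(ab)$ one sees that $\mu(ab)$ is a pre-fixed point of $b$, so $\mu b \sqsubseteq \mu(ab)$ and hence $a(\mu b) \sqsubseteq a(\mu(ab)) = \mu(ab)$. For the reverse inequality $\mu(ab) \sqsubseteq a(\mu b)$, the key move is to instantiate the local-completeness hypothesis at $f := a(\mu b)$. Since $aa = a$, this $f$ lies in $Pre(a)$, so the assumption yields $\mu b \sqsubseteq f \iff \mu(ab) \sqsubseteq f$. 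The left-hand side is trivially true because $\mu b \sqsubseteq a(\mu b) = f$, so the right-hand side gives $\mu(ab) \sqsubseteq a(\mu b)$. Combining the two inequalities yields $a(\mu b) = \mu(ab)$, which is $b$-completeness by Lemma~\ref{lemma:alternativecompleteness}.

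The only obstacle worth flagging is recognising that one must feed $f := a(\mu b)$ back into the local-completeness assumption in the backward direction; once that instantiation is spotted, the proof is a short chase of inequalities with no subtle continuity or well-definedness issues.
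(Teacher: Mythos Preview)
Your proof is correct and follows essentially the same route as the paper: the forward direction unpacks $b$-completeness as $a(\mu b)=\mu(ab)$ and chases the two implications (the paper bundles this into a citation of Lemma~\ref{lemma:absouncomplete}.2, but the content is identical), and the backward direction is exactly the paper's argument---instantiate $f:=a(\mu b)$, use $\mu b\sqsubseteq a(\mu b)$ to fire the biconditional, and combine with the always-valid inclusion $a(\mu b)\sqsubseteq\mu(ab)$.
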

\begin{example}\label{ex:localvsglobal}
Consider $a$, $b^*$, $i$ and the lattice in Section~\ref{sec:counter} and recall that $a$ is not $(i \sqcup b^*)$-complete. 
Now for $f\in  \{\top, 4,3\}$, it is immediate to see that $a$ is $(i \sqcup b^*,f)$-complete. Instead, for $f=2$, it is not: $\nu(i \sqcup b^*) =1$, while $\nu a (i \sqcup b^*)=3$.
\end{example}

In Section~\ref{sec:companion}, we have seen that the class of sound up-to techniques is downward closed. This is not the case with the standard definition of completeness for abstract domains (Example~\ref{ex:notdown}) but it holds for local completeness (Proposition~\ref{prop:bfclosed}).

\begin{example}\label{ex:notdown}
Recall from Example~\ref{ex:localvsglobal} that $a$  is not $(i \sqcup b^*)$-complete. 
Now take $a'$ to be the up-closure  such that $Pre(a')= \{\top, 4,3\}$. In this case,  one has that $a'(\mu (i \sqcup b^*) ) = 3 = \mu a' (i \sqcup b^*)$, i.e., $a'$ is $(i \sqcup b^*)$-complete. In this case $a\sqsubseteq a'$.
\end{example}

\begin{proposition}\label{prop:bfclosed} Let $C$ be a complete lattice and  $b,a_1,a_2\colon C \to C$ be a monotone map and two up-closures such that $a_1\sqsubseteq a_2$. Let $f\in C$. 

\noindent If $a_2$ is $(b,f)$-complete, then $a_1$ is $(b,f)$-complete.
\end{proposition}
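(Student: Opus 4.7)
The plan is to unfold the definition of local completeness and exploit the monotonicity relationships between $b$, $a_1 b$, and $a_2 b$ that follow from $a_1 \sqsubseteq a_2$ and the fact that $a_1, a_2$ are up-closures (so $id \sqsubseteq a_1 \sqsubseteq a_2$).

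For condition (1) of $(b,f)$-completeness of $a_1$, namely $a_1(f) \sqsubseteq f$, the argument is immediate: from $a_1 \sqsubseteq a_2$ we get $a_1(f) \sqsubseteq a_2(f)$, and $a_2(f) \sqsubseteq f$ holds because $a_2$ is $(b,f)$-complete. So $a_1(f) \sqsubseteq f$.

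For condition (2), I would first establish the key chain
\begin{equation*}
\mu b \;\sqsubseteq\; \mu(a_1 b) \;\sqsubseteq\; \mu(a_2 b)\enspace.
\end{equation*}
The left inequality follows from $b \sqsubseteq a_1 b$ (since $id \sqsubseteq a_1$), and the right one from $a_1 b \sqsubseteq a_2 b$ (since $a_1 \sqsubseteq a_2$), together with monotonicity of $\mu$ on monotone maps. Given this chain, both directions of the biconditional in (2) for $a_1$ follow at once from (2) applied to $a_2$: if $\mu b \sqsubseteq f$, then by $(b,f)$-completeness of $a_2$ we have $\mu(a_2 b) \sqsubseteq f$, hence $\mu(a_1 b) \sqsubseteq f$; conversely, if $\mu(a_1 b) \sqsubseteq f$, then $\mu b \sqsubseteq \mu(a_1 b) \sqsubseteq f$.

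I do not anticipate any serious obstacle: the only subtlety to mention is that the forward direction $\mu b \sqsubseteq f \Rightarrow \mu(a_1 b) \sqsubseteq f$ cannot be obtained from $\mu b \sqsubseteq \mu(a_1 b)$ alone and genuinely uses the hypothesis on $a_2$, which is where $a_1 \sqsubseteq a_2$ is consumed. This is exactly the point where local completeness behaves better than the global notion (cf.\ Example~\ref{ex:notdown}), since here we do not need $a_1$'s abstract domain to be closed under $b$ — only to be sandwiched below $a_2$'s.
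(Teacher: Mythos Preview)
Your proof is correct and follows essentially the same approach as the paper's: establish $a_1(f)\sqsubseteq a_2(f)\sqsubseteq f$ for condition (1), and use the chain $\mu b \sqsubseteq \mu(a_1 b) \sqsubseteq \mu(a_2 b)$ together with the $(b,f)$-completeness of $a_2$ for condition (2). The only cosmetic difference is that you state the chain upfront and then apply it, whereas the paper weaves the two inequalities into the two directions of the biconditional as needed.
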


This property makes the proof of local completeness much easier than those of completeness. Indeed, for the latter it is enough to prove full completeness, while for the former it is enough to prove to be below some fully complete domain (Theorem~\ref{them:localcomp}). This is similar to what happens with the companion for up-to techniques. However, the small asymmetry of $i$ and $f$ discussed in Section~\ref{sec:bridge} forces us to consider a little variation of the notion of companion.

\begin{definition}
Let $C$ be a complete lattice, $b\colon C \to C$ be a monotone map and $f\in C$. 
A monotone map $a\colon C \to C$ is \emph{\((b,f)\)-compatible}, if{}f
\begin{inparaenum}[\upshape(\itshape 1\upshape)]
  \item  \(a(f)\sqsubseteq f \)  and 
	\item  \(a b \sqsubseteq ba\).
\end{inparaenum}
The \emph{$f$-companion} of $b$ is 
\[\omega_{b,f}=\bigsqcup \{a \mid a \text{ is $(b,f)$-compatible} \}\enspace .\] 
\end{definition}
In the above definition the least upper bound is taken in the lattice of monotone functions. However $\omega_{b,f}$ is guaranteed to be an up-closure which, additionally, is $(b,f)$-compatible. 
\begin{proposition}\label{prop:fcompanion}
The following holds:
\begin{enumerate}
\item $\omega_{b,f} b \sqsubseteq b \omega_{b,f}$, 
\item $\omega_{b,f}(f)\sqsubseteq f$,
\item $x \sqsubseteq \omega_{b,f}(x)$ for all $x\in C$,
\item $\omega_{b,f}(\omega_{b,f}(x)) \sqsubseteq \omega_{b,f}(x)$ for all $x\in C$.
\end{enumerate}
\end{proposition}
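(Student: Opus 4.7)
The plan is to show that the class of $(b,f)$-compatible maps forms a structure with three properties: closed under arbitrary pointwise sups, contains the identity, and is closed under composition. Once these are established, Proposition~\ref{prop:fcompanion} follows immediately.

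\textbf{Step 1 (closure under sups, yielding (1) and (2)).} Let $\{a_i\}_{i\in I}$ be any family of $(b,f)$-compatible maps, and set $a=\bigsqcup_i a_i$ (pointwise, in the complete lattice of monotone maps). For item (2), I compute $a(f)=\bigsqcup_i a_i(f)\sqsubseteq f$, using that each $a_i(f)\sqsubseteq f$. For item (1), fix $x\in C$: by monotonicity of $b$, $b(a_i(x))\sqsubseteq b(a(x))$ for every $i$, so $(a\comp b)(x)=\bigsqcup_i a_i(b(x))\sqsubseteq \bigsqcup_i b(a_i(x))\sqsubseteq b(a(x))=(b\comp a)(x)$, where the first inequality uses $a_i b\sqsubseteq b a_i$. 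Applied to the defining family of $\omega_{b,f}$, this shows \(\omega_{b,f}\) itself is $(b,f)$-compatible, giving (1) and (2).

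\textbf{Step 2 (identity, yielding (3)).} The map $id$ is $(b,f)$-compatible: trivially $id(f)=f\sqsubseteq f$, and $id\comp b=b=b\comp id$. Hence $id$ is among the maps whose sup defines $\omega_{b,f}$, so $x=id(x)\sqsubseteq \omega_{b,f}(x)$ for every $x\in C$, which is (3).

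\textbf{Step 3 (composition, yielding (4)).} I claim that if $a_1,a_2$ are $(b,f)$-compatible, then so is $a_1\comp a_2$. For the first condition, $(a_1\comp a_2)(f)=a_1(a_2(f))\sqsubseteq a_1(f)\sqsubseteq f$ by monotonicity of $a_1$ and compatibility of $a_2$, then $a_1$. For the second, using compatibility twice together with monotonicity of the compatible maps,
\[
(a_1\comp a_2)\comp b \;=\; a_1\comp(a_2\comp b)\;\sqsubseteq\; a_1\comp(b\comp a_2)\;=\;(a_1\comp b)\comp a_2\;\sqsubseteq\;(b\comp a_1)\comp a_2\;=\;b\comp(a_1\comp a_2).
\]
Since $\omega_{b,f}$ is $(b,f)$-compatible by Step~1, the composite $\omega_{b,f}\comp\omega_{b,f}$ is $(b,f)$-compatible, hence lies below the sup $\omega_{b,f}$ by definition, giving (4).

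The only subtlety worth watching is the interaction between the pointwise sup and composition in Step~1: one must check the inequality $\bigsqcup_i b(a_i(x))\sqsubseteq b(\bigsqcup_i a_i(x))$, which holds by monotonicity of $b$ (but not equality in general, since $b$ need not preserve sups). Everything else is routine, and no appeal to adjointness, continuity, or Assumption~\ref{assumption} is required.
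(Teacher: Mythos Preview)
Your proof is correct and follows essentially the same approach as the paper: closure of $(b,f)$-compatible maps under pointwise sups gives (1) and (2), the identity gives (3), and closure under composition gives (4). The paper is terser---it cites Proposition~\ref{prop:mod} and external references for the compatibility of sups and compositions rather than spelling out the inequalities---but the structure and ideas are identical.
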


Observe that $(b,f)$-compatibility entails $(b\sqcap f)$-compatibility by Proposition~\ref{prop:mod}.5, but the converse does not hold in general. We need this stronger notion of compatibility because, under Assumption~\ref{assumption}, $(b_*\sqcap f)$-compatibility alone does not allow to deduce $(i\sqcup b^*)$-completeness. Instead,  for $(b_*,f)$-compatibility, this follows immediately from Theorem~\ref{thm:link}.
\begin{corollary}\label{cor:bfcomptivility}
	Let $C \galois{b^*}{b_*} C$ be a pair of adjoint, $a\colon C \to C$ be an up-closure and $i,f\in C$.  If $a$ is $(b_*,f)$-compatible, then $a$ is \((i\sqcup b^*)\)-complete, hence also \((i\sqcup b^*, f)\)-complete.
\end{corollary}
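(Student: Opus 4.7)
The plan is to reduce the statement to Theorem~\ref{thm:link} followed by Proposition~\ref{prop:completeifflocal}. First, I would unpack $(b_*,f)$-compatibility into its two defining clauses: (1) $a(f)\sqsubseteq f$, which places $f$ in $Pre(a)$ and thus discharges item (v) of Assumption~\ref{assumption}; and (2) $ab_*\sqsubseteq b_*a$, which is precisely the right-hand formulation of the equivalence~\eqref{eq:bridge}. All other items of Assumption~\ref{assumption} are given by hypothesis ($C$ a complete lattice, the adjunction $b^*\dashv b_*$, $a$ an up-closure, $i\in C$).

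With Assumption~\ref{assumption} fully in place and the hypothesis of~\eqref{eq:bridge} satisfied, Theorem~\ref{thm:link} applies directly and yields that $a$ is an $(i\sqcup b^*)$-complete abstract domain. This proves the first conclusion.

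For the ``hence'' clause I would invoke Proposition~\ref{prop:completeifflocal}, which tells us that a $b$-complete up-closure is $(b,g)$-complete at every $g\in Pre(a)$. Since the previous step supplies $(i\sqcup b^*)$-completeness and clause (1) of $(b_*,f)$-compatibility already gave $f\in Pre(a)$, we conclude that $a$ is $(i\sqcup b^*,f)$-complete.

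There is no real obstacle here: the substance of the argument is already packaged in Theorem~\ref{thm:link}, which internally combines~\eqref{eq:bridge} (translating $ab_*\sqsubseteq b_*a$ into $b^*a\sqsubseteq ab^*$) with the modularity clause Proposition~\ref{prop:mod}.3 to lift full completeness from $b^*$ alone to $i\sqcup b^*$, using $i\sqsubseteq a(i)$ which comes for free because $a$ is an up-closure. The only mild bookkeeping point is to notice that $f\in Pre(a)$ plays a double role, serving both as item (v) of Assumption~\ref{assumption} in the first step and as the hypothesis required by Proposition~\ref{prop:completeifflocal} in the second.
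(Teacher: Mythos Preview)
Your proof is correct and follows exactly the route sketched in the paper: the first conclusion is obtained from Theorem~\ref{thm:link} after observing that $(b_*,f)$-compatibility supplies both $f\in Pre(a)$ (Assumption~\ref{assumption}(v)) and the hypothesis~\eqref{eq:bridge}, and the second conclusion is then read off from Proposition~\ref{prop:completeifflocal}.
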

The second part of the statement follows from Proposition~\ref{prop:completeifflocal}.
Next, we combine Proposition~\ref{prop:bfclosed}, Proposition~\ref{prop:fcompanion} and Corollary~\ref{cor:bfcomptivility} to obtain the main result of this section.

\begin{theorem}\label{them:localcomp}
	Let $C \galois{b^*}{b_*} C$ be a pair of adjoint, $a\colon C \to C$ be an up-closure and $i,f\in C$. 
If $a\sqsubseteq \omega_{b_*,f}$, then $a$ is $(i\sqcup b^*, f)$-complete.
\end{theorem}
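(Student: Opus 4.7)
The plan is to compose the three cited ingredients into a short chain of implications, with $\omega_{b_*,f}$ acting as the intermediary between $a$ and the property we want to establish.

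First, I would verify that $\omega_{b_*,f}$ is itself an up-closure operator that is $(b_*,f)$-compatible. The compatibility is immediate from parts (1) and (2) of Proposition~\ref{prop:fcompanion}, which say $\omega_{b_*,f} b_* \sqsubseteq b_* \omega_{b_*,f}$ and $\omega_{b_*,f}(f) \sqsubseteq f$. The fact that $\omega_{b_*,f}$ is an up-closure follows from parts (3) and (4) of the same proposition (extensivity and idempotency), together with its monotonicity as a least upper bound of monotone maps.

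Next, I would invoke Corollary~\ref{cor:bfcomptivility} with the up-closure $\omega_{b_*,f}$ playing the role of $a$. Since we just established that $\omega_{b_*,f}$ is $(b_*,f)$-compatible, the corollary yields that $\omega_{b_*,f}$ is $(i\sqcup b^*, f)$-complete. This is the crucial pivot: it converts a statement about compatibility with the right adjoint $b_*$ into a statement about local completeness for the left-adjoint-based map $i \sqcup b^*$, exploiting precisely the bridge developed in Section~\ref{sec:bridge}.

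Finally, since we are given $a \sqsubseteq \omega_{b_*,f}$ and both are up-closures, I would apply Proposition~\ref{prop:bfclosed} with $b := i \sqcup b^*$, $a_1 := a$ and $a_2 := \omega_{b_*,f}$ to conclude that $a$ is $(i \sqcup b^*, f)$-complete, as required. No substantive obstacle is expected: the proof is purely a matter of chaining the three auxiliary results in the correct order. The only subtlety worth checking explicitly is that Proposition~\ref{prop:bfclosed} really does apply to $b := i \sqcup b^*$ (which is a monotone map on $C$ once we identify $i$ with the corresponding constant map, as remarked at the beginning of Section~\ref{sec:bridge}) and that the hypothesis $a(f)\sqsubseteq f$ implicit in local completeness is inherited via $a \sqsubseteq \omega_{b_*,f}$ from $\omega_{b_*,f}(f)\sqsubseteq f$.
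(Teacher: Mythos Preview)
Your proposal is correct and follows exactly the same three-step chain as the paper's proof: use Proposition~\ref{prop:fcompanion} to see that $\omega_{b_*,f}$ is a $(b_*,f)$-compatible up-closure, apply Corollary~\ref{cor:bfcomptivility} to obtain its $(i\sqcup b^*,f)$-completeness, and then descend to $a$ via Proposition~\ref{prop:bfclosed}. Your additional remarks (that $\omega_{b_*,f}$ must be an up-closure for the corollary to apply, and that $i\sqcup b^*$ is monotone) are well taken and make explicit points the paper leaves implicit.
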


It is worth to visualise the difference between $\omega_{b_*\sqcap f}$ and $\omega_{b_*,f}$ in terms of the associated abstract domains. Under the assumptions of Theorem~\ref{thmcompanion}, $\Omega_{b_*\sqcap f}$ is the sublattice of \(C\) (consisting of a chain) given by 
\[\top \sqsupseteq (b_*\sqcap f)(\top)\sqsupseteq (b_*\sqcap f)(b_*\sqcap f)(\top) \sqsupseteq \dots \sqsupseteq \nu (b_*\sqcap f)\]
that, since $b_*(\top)=\top$ and $b_*(x\sqcap y)=b_*x \sqcap b_*y$, coincides with
\[\top \sqsupseteq f \sqsupseteq b_*(f)\sqcap f \sqsupseteq b_*b_*(f) \sqcap b_*(f)\sqcap f  \sqsupseteq \dots \sqsupseteq \nu (b_*\sqcap f)\text{.}\]
Instead $\Omega_{b_*,f}$ is the smallest meet-complete sublattice of \(C\) containing 
\begin{equation}\label{eq:Omegabf}
\top \qquad f \qquad b_*(f) \qquad b_*b_*(f) \qquad b_* b_*b_*(f)  \qquad \dots
\end{equation}

\begin{corollary}\label{corSBRA}
If $b_*^j(f)\in Pre(a)$ for all $j\in \Nat$, then $a$ is $(i\sqcup b^*, f)$-complete.
\end{corollary}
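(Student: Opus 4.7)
The plan is to reduce the statement to a containment of abstract domains and then verify this containment using basic closure properties of $Pre(a)$.

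First I would invoke Theorem~\ref{them:localcomp}: since the Galois connection $C \galois{b^*}{b_*} C$ is in place and $a$ is an up-closure, it suffices to prove that $a \sqsubseteq \omega_{b_*,f}$. By Lemma~\ref{lemma:inclusion}, this inequality is equivalent to the reverse inclusion of domains, namely $Pre(\omega_{b_*,f}) \subseteq Pre(a)$; and by the very definition of $\omega_{b_*,f}$ (the up-closure associated to the sublattice $\Omega_{b_*,f}$), we have $Pre(\omega_{b_*,f}) = \Omega_{b_*,f}$. Hence the goal becomes: show $\Omega_{b_*,f} \subseteq Pre(a)$.

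Next I would use the explicit description of $\Omega_{b_*,f}$ given just above the corollary, namely that it is the smallest meet-complete sublattice of $C$ containing $\top$ and the family $\{\,b_*^j(f) \mid j \in \Nat\,\}$. Because $a$ is an up-closure, $Pre(a)$ is a Moore family: it contains $\top$ and is closed under arbitrary meets. The hypothesis of the corollary additionally gives $b_*^j(f) \in Pre(a)$ for every $j \in \Nat$. Since $Pre(a)$ is a meet-complete sublattice of $C$ containing all the generators of $\Omega_{b_*,f}$, the minimality of $\Omega_{b_*,f}$ yields $\Omega_{b_*,f} \subseteq Pre(a)$, as required.

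Putting the two steps together gives $a \sqsubseteq \omega_{b_*,f}$, hence $a$ is $(i \sqcup b^*, f)$-complete by Theorem~\ref{them:localcomp}. I do not expect any serious obstacle: the only thing to be careful about is that the characterisation of $\Omega_{b_*,f}$ used in step~two really is the one stated in the paragraph preceding the corollary (a smallest \emph{meet}-complete sublattice, not merely a chain, because closures under $b_*$ alone are not automatically meet-closed), and that $Pre(a)$ of an up-closure is genuinely meet-complete in $C$ — both facts are standard and have already been recorded in the preliminaries and in Theorem~\ref{thmcompanion}.
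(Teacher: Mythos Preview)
Your proposal is correct and follows essentially the same route as the paper's proof: reduce to $a \sqsubseteq \omega_{b_*,f}$ via Theorem~\ref{them:localcomp}, translate this by Lemma~\ref{lemma:inclusion} into $\Omega_{b_*,f} \subseteq Pre(a)$, and conclude because $Pre(a)$ is a meet-complete sublattice containing all the generators $b_*^{j}(f)$. The only point the paper spells out more explicitly is \emph{why} the description of $\Omega_{b_*,f}$ as the meet-closure of $\{\top\}\cup\{b_*^{j}(f)\mid j\in\Nat\}$ is valid: since $b_*$ is a right adjoint it preserves arbitrary meets, so the meet-closure of that generating set is already closed under $b_*$ (and contains $f$), hence equals the smallest $(b_*,f)$-compatible abstract domain; you take this characterisation as given from the preceding paragraph, which is fine.
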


\begin{example}
Recall the abstract domain of signs $\Sign_Z$, $b_*$ defined in~\eqref{eq:signcoind} and $f=[0,\infty)$. We know that $s$ is not $(b_*,f)$-compatible because of~\eqref{eq:notfc}. However, since $b_*(f)=Z$, $\Omega_{b_*,f}$ is just the complete lattice $Z \sqsupseteq [0,\infty)$. Therefore $s$ is below the $f$-companion.
\end{example}

\subsection{A coinductive characterization of the \texorpdfstring{$f$}{f}-companion}\label{sec:loccoind}
As mentioned in Section~\ref{sec:companion}, the companion enjoys a coinductive characterization that is useful  to prove by ``second order coinduction'' soundness of up-to techniques. We conclude this section by briefly showing that a similar characterization can be given for the $f$-companion in order to prove local completeness of abstract domains. Our argument is a tiny variation of Section~6 in~\cite{pous2016coinduction}.

\begin{definition}
	Let $[C \to C]$ be the complete lattice of monotone maps on $C$. The function $B\colon [C \to C] \to [C \to C]$ is defined for all $a\colon C \to C$ as \[B(a)=\bigsqcup \{c \mid c b\sqsubseteq ba, \\ c(f)\sqsubseteq f \}\enspace .\]
\end{definition}
\begin{lemma}\label{lemma:coinfcomp}
$B$ is monotone and for all functions $a,a'\colon C \to C$,
$$a'\sqsubseteq B(a) \text{ if{}f } a'b\sqsubseteq ba \text{ and } a'(f)\sqsubseteq f \text{.}$$
\end{lemma}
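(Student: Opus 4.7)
The plan is to treat the two assertions of the lemma separately and reduce each to a direct verification using the pointwise structure of $[C \to C]$ and the monotonicity of $b$.

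\textbf{Monotonicity of $B$.} Suppose $a_1 \sqsubseteq a_2$. Since $b$ is monotone, $ba_1 \sqsubseteq ba_2$ pointwise. Therefore any $c$ with $cb \sqsubseteq ba_1$ also satisfies $cb \sqsubseteq ba_2$; the second defining constraint $c(f)\sqsubseteq f$ does not depend on the argument $a$. Hence the set in the definition of $B(a_1)$ is contained in the set defining $B(a_2)$, and taking the least upper bound yields $B(a_1) \sqsubseteq B(a_2)$.

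\textbf{The ``if'' direction.} If $a'b\sqsubseteq ba$ and $a'(f)\sqsubseteq f$, then $a'$ itself is a member of the set $\{c \mid cb\sqsubseteq ba,\; c(f)\sqsubseteq f\}$, and is therefore bounded above by the join of this set, namely $B(a)$.

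\textbf{The ``only if'' direction.} The main step is to show that $B(a)$ itself belongs to the defining set (so the join is in fact a maximum), after which the conclusion follows by monotonicity of composition. Write $S=\{c \mid cb\sqsubseteq ba,\; c(f)\sqsubseteq f\}$. Since joins in $[C\to C]$ are computed pointwise, for every $x\in C$:
\begin{equation*}
B(a)(x) \;=\; \bigsqcup_{c\in S} c(x)\text{.}
\end{equation*}
Evaluating at $x=f$ and using $c(f)\sqsubseteq f$ for each $c\in S$, we get $B(a)(f) = \bigsqcup_{c\in S} c(f) \sqsubseteq f$. Similarly, evaluating at $b(x)$ and using $cb(x)\sqsubseteq ba(x)$ for each $c\in S$, we get $B(a)b(x) = \bigsqcup_{c\in S} cb(x) \sqsubseteq ba(x)$, i.e., $B(a)b\sqsubseteq ba$. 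Now if $a'\sqsubseteq B(a)$, then for every $x$ we have $a'(x)\sqsubseteq B(a)(x)$, so $a'(f)\sqsubseteq B(a)(f)\sqsubseteq f$ and $a'b(x)=a'(b(x))\sqsubseteq B(a)(b(x)) \sqsubseteq ba(x)$, which gives $a'b\sqsubseteq ba$.

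I do not expect any serious obstacle here: the argument is essentially a variant of the standard fact that the set defining a companion-like operator is closed under joins, and the whole proof hinges on (i) pointwise computation of joins in $[C\to C]$ and (ii) monotonicity of $b$. The only point worth care is verifying that both constraints defining $S$ are preserved by arbitrary pointwise joins, which is exactly what the two displayed calculations above establish.
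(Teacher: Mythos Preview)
Your proof is correct and follows essentially the same approach as the paper: the monotonicity argument via set inclusion, the ``if'' direction by observing $a'$ itself lies in the defining set, and the ``only if'' direction by evaluating the pointwise join at $b(x)$ and at $f$ are exactly what the paper does. Your additional framing that $B(a)$ itself belongs to $S$ (so the join is a maximum) is a nice explicit statement of the underlying fact, but the computations are identical.
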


This means that $a$ is $(b,f)$-compatible if{}f $a\sqsubseteq B(a)$, that is $a$ is a post-fixed point of $B$. By the Knaster-Tarski fixed point theorem, one has immediately the following result.
\begin{theorem}
	$\omega_{b,f}=\nu B$
\end{theorem}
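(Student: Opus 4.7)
The plan is to read the theorem as a direct application of the Knaster--Tarski fixed-point theorem to the monotone operator $B$ on the complete lattice $[C \to C]$, using Lemma~\ref{lemma:coinfcomp} to match the definition of $(b,f)$-compatibility with the post-fixed points of $B$.

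First, I would specialise Lemma~\ref{lemma:coinfcomp} by setting $a' = a$: this gives that $a \sqsubseteq B(a)$ holds exactly when $ab \sqsubseteq ba$ and $a(f) \sqsubseteq f$. But these are precisely the two clauses defining $(b,f)$-compatibility. Hence the set of post-fixed points of $B$ coincides with the set of $(b,f)$-compatible maps:
\begin{equation*}
\{a \in [C \to C] \mid a \sqsubseteq B(a)\} \;=\; \{a \in [C \to C] \mid a \text{ is } (b,f)\text{-compatible}\}\enspace .
\end{equation*}

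Next, since $[C \to C]$ is a complete lattice under the pointwise order and $B$ is monotone (by Lemma~\ref{lemma:coinfcomp}), the Knaster--Tarski theorem characterises $\nu B$ as the join of all post-fixed points of $B$:
\begin{equation*}
\nu B \;=\; \bigsqcup \{a \in [C \to C] \mid a \sqsubseteq B(a)\}\enspace .
\end{equation*}
Combining the two displays, $\nu B$ equals the join of all $(b,f)$-compatible maps, which by definition is $\omega_{b,f}$. So $\omega_{b,f} = \nu B$, as required.

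There is essentially no obstacle once Lemma~\ref{lemma:coinfcomp} is in hand; the only thing to be mildly careful about is that the join in the definition of $\omega_{b,f}$ is taken in $[C \to C]$ rather than pointwise in $C$, so that the Knaster--Tarski application is in the correct ambient lattice. Proposition~\ref{prop:fcompanion} already guarantees that this join is itself $(b,f)$-compatible (i.e.\ a post-fixed point of $B$), which is consistent with its identification as $\nu B$.
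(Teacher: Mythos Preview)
Your proposal is correct and follows exactly the approach the paper indicates: the paper itself just remarks, right before the theorem, that by Lemma~\ref{lemma:coinfcomp} the $(b,f)$-compatible maps are precisely the post-fixed points of $B$, and then invokes Knaster--Tarski. Your write-up spells this out in slightly more detail (and the consistency check via Proposition~\ref{prop:fcompanion} is a nice touch), but the argument is the same.
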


\section{Conclusion}
In this paper we studied the relationship existing in between sound up-to techniques and complete abstract domains. In general, the two concepts do not coincide (Section~\ref{sec:counter}) but, under reasonable assumptions (Assumption \ref{assumption}), the sufficient conditions that are commonly used to prove soundness of up-to techniques --compatibility-- and completeness of abstract domains --full complete\-ness-- are equivalent (Theorem~\ref{thm:link}). This allows to look at fully complete abstract domains as sound up-to techniques and, vice versa, to look at compatible up-to techniques as complete abstract domains. As an example of the latter, we have shown  that the Hopcroft and Karp's algorithm~\cite{Aho:1974:DAC:578775,HopcroftKarp}, which was recently observed to rely on up-to techniques~\cite{bp:popl13:hkc}, can also be studied from the viewpoint of complete abstract interpretation.

We hope that our observation can lead to a fruitful cross-fertilisa\-tion amongst two areas that, so far, have developed their own technologies independently. 
As a proof of concept for this technology transfer, we have shown that recent developments in up-to techniques~\cite{pous2016coinduction} lead to a weaker notion of completeness, called local completeness, that is enough to ensure that if a certain property is not satisfied in the abstract domain, then it does not hold  in the concrete one. Interestingly enough, local completeness can be proved by means of coinduction. As a short term application, we mention that, to prove completeness of an abstract domains for a certified abstract interpreter (see e.g. \cite{bertot2009structural,leroy2014formal,verasco}) one could, thanks to our work, reuse one of the many available libraries for up-to techniques that have been developed in different proof assistants (see e.g., \cite{damienlibrary,danielsson2017up}).

We leave as a future work the connection with domain completion techniques~\cite{GiacobazziRS00,CGR07,RanzatoRT08} which, intuitively, define strategies to enrich an abstract domain with new values as long as it is not precise enough to prove a given property. The correspondence between completeness in abstract interpretation and soundness in up-to techniques can also motivate the extension of methods for proving the absence of false alarms in abstract interpretations, such as the proof system in \cite{GiacobazziLR15}, to prove soundness of corresponding up-to techniques. 

\newcommand{\noopsort}[1]{}

\clearpage
\appendix

\section{A categorical perspective}\label{sec:cat}
Most of the concepts discussed in this paper can be extended from lattices to categories: a lattice can be seen as a category, a monotone map as a functor, an up-closure operator as a monad and a down-closure operator as a comonad. Pre and post-fixed point as algebras and coalgebras, the least and the greatest fixed point as the initial algebra and the final coalgebra. 

\medskip

This perspective motivates the terminology EM (Eilenberg Moore) law and Kleisli law for the conditions $\bullet 2$ and $\circ 2$ in Lemma~\ref{lemma:equivalentformulation}. Indeed, one can think to the problem of completeness of abstract interpretation and soundness of up-to techniques as the problem of \emph{extending} and \emph{lifting} the functor $b\colon C \to C$ to some functor $\overline{b}$ either on the Kleisli category $Kl(a)$ or to the Eilenberg-Moore category $EM(a)$ of algebras for the monad $a\colon C \to C$. In this case, since $C$ is a lattice, one has that $Kl(a) = EM(a) =Pre(a)$. In this perspective, completeness of full abstraction means that there is a functor $\alpha \colon Alg(b) \to Alg(\overline{b})$ preserving initial algebra (this is entailed by requiring $\alpha$ to be a left adjoint). Similarly, soundness of up-to techniques means that there is a functor $\gamma \colon Coalg(\overline{b}) \to Coalg (b)$ that preserves the final coalgebra (this is entailed by requiring $\gamma$ to be a right adjoint). The latter is rather well-studied problem, which arise for instance with bialgebras (see e.g., \cite{DBLP:journals/tcs/Klin11,turi1997towards}). The former instead is far less understood.

\begin{figure*}
\begin{tabular}{c|c}
Abstract Interpretation &  Coinduction up-to\\
\hline
Kleisli law $ba \sqsubseteq ab$ & EM-law $ab \sqsubseteq ba$  \\
Kleisli Extension $\overline{b} \colon Kl(a) \to Kl(a)$ & EM lifting $\overline{b} \colon EM(a) \to EM(a)$\\
$\alpha \colon Alg(b) \to Alg(\overline{b})$ is a left adjoint & $\gamma \colon Coalg(\overline{b}) \to Coalg (b)$ is a right adjoint \\
\end{tabular}\caption{The category theory behind complete abstract domains and sound up-to techniques}
\end{figure*}

\section{Proofs}\label{app:proof}

\subsection{Proofs of Section~\ref{sec:upto}}
\begin{proof}[Proof of Lemma~\ref{lemma:charsound}]
Suppose that~\eqref{eq:coinductionuptoproofprinciple} holds and take in its premises $i=x=\nu ba=ba(\nu ba)$. The conclusions of~\eqref{eq:coinductionuptoproofprinciple} means $\nu ba \sqsubseteq \nu b$. Vice versa, suppose that $(\dag)$ $\nu ba \sqsubseteq \nu b$. Assume the premises in~\eqref{eq:coinductionuptoproofprinciple}. By coinduction, one has that $i\sqsubseteq \nu ba$ and thus by $(\dag)$, one has that $i\sqsubseteq \nu b$. 
\end{proof}

\subsection{Proofs of Section~\ref{sec:ai}}
\begin{proof}[Proof of Lemma~\ref{lemma:absouncomplete}]
\begin{enumerate}
\item If $\mu b \sqsubseteq_C f$ then $\alpha(\mu b) \sqsubseteq_A \alpha(f) = f$ and then $a (\mu b) = \gamma\comp \alpha(\mu b) \sqsubseteq_C \gamma(f) = f$. The latter entails that $\mu b \sqsubseteq_C a(\mu b) \sqsubseteq_C f$.
\item By 1.\@ with $x=\mu b$.
\item First observe that $b\gamma(\mu \overline{b}) \sqsubseteq \gamma \overline{b}\alpha \gamma (\mu \overline{b}) = \gamma \overline{b}(\mu \overline{b})= \gamma(\mu \overline{b})$. That is $\gamma(\mu \overline{b})$ is a pre-fixed  point of $b$: thus $\mu b \sqsubseteq \gamma(\mu \overline{b})$. By adjointness $\alpha(\mu b) \sqsubseteq_A \mu \overline{b}$.
\item If $\mu \overline{b} \sqsubseteq_A f$ then, by 3., $\alpha(\mu b)  \sqsubseteq_A f$. By 2. $\mu b \sqsubseteq_C f$.
\end{enumerate}
\end{proof}

\begin{proof}[Proof of Proposition~\ref{prop:alwayssound}]
(1) trivial: $\alpha b \sqsubseteq \alpha b \gamma \alpha$. (2) Suppose that $\alpha b \sqsubseteq \overline{b} \alpha$. In particular, for all $x\in Pre(a)$, $\alpha b \gamma (x)\sqsubseteq \overline{b} \alpha \gamma(x) = \overline{b}(x)$.
\end{proof}

Therefore, for all abstract domain $a$, there exists a sound approximation $\overline{b}^a$, that is $\mu \overline{b}^a \sqsubseteq_A f$ implies that $\mu b \sqsubseteq_C f$.
The converse implication  guaranteed in general. One has to require completeness of the abstract domain: $a$ is \emph{complete} \wrt{}  $b$ (or $b$-complete, for short) if{}f $\alpha(\mu b) = \mu  \overline{b}^a$.

\begin{proof}[Proof of Lemma~\ref{lemma:characterizationcompleteness}]
Assume $\alpha(\mu b) = \mu \overline{b}^a$. By Lemma~\ref{lemma:absouncomplete}.2, $\mu \overline{b}^a \sqsubseteq_A f$ if{}f $\alpha(\mu b)  \sqsubseteq_A f$ if{}f $\mu b \sqsubseteq_C f$.
\end{proof}

\begin{proof}[Proof of Lemma~\ref{lemma:alternativecompleteness}]
This is Lemma 3.1~\cite{GiacobazziRS00}. For reader convenience, we report its proof below.

We  first prove the left-to-right implication.
The inclusion $a(\mu b) \sqsubseteq \mu (ab)$ always holds. Indeed: $b \sqsubseteq ab$, thus $\mu b \sqsubseteq \mu (ab)$ and, by monotonicity of $a$, 
$a(\mu b) \sqsubseteq a(\mu (ab))$. Moreover $a(\mu (ab))=aab(\mu(ab))=ab(\mu(ab))= \mu(ab)$.

The other inclusion holds if $a$ is $b$-complete. Indeed, in this case $\alpha(\mu b) = \mu(\alpha b \gamma)$, which entails that $\alpha b \gamma \alpha (\mu b) = \alpha(\mu b)$. By monotonicity of $\gamma$, $\gamma \alpha b \gamma \alpha (\mu b) = \gamma \alpha(\mu b)$, i.e.,  $\gamma \alpha(\mu b) = a(\mu b)$ is a fixed point of $ab$. Therefore $\mu(ab) \sqsubseteq a(\mu b)$.

\medskip

We can now prove the right-to-left implication. It always hold that $\alpha(\mu b) \sqsubseteq \mu (\alpha b \gamma)$. For the other inclusion, observe that by assumption we have $a(\mu b) = \mu (a b) $ which entails that $\gamma \alpha b \gamma \alpha (\mu b)= \gamma \alpha (\mu b)$. Since $\gamma$ is injective, $ \alpha b \gamma \alpha (\mu b)=  \alpha (\mu b)$, that is $\alpha (\mu b)$ is a fixed point of $\alpha b \gamma $. Thus $\mu(\alpha b \gamma) \sqsubseteq \alpha (\mu b)$.
\end{proof}

\subsection{Proofs of Section~\ref{sec:soundandcomplete}}
\begin{proof}[Proof of Lemma~\ref{lemma:equivalentformulation}]
First part.

($1\Rightarrow 2$) $ab = aba \sqsupseteq ba$ since $a$ is an up-closure. ($2\Rightarrow 3$) It always holds $(\alpha b \gamma) \alpha \sqsupseteq \alpha b$. For the other inclusion, observe that if $ba \sqsubseteq ab$, then $b\gamma \alpha \sqsubseteq \gamma \alpha b$ and  $\alpha b\gamma \alpha \sqsubseteq \alpha \gamma \alpha b \sqsubseteq \alpha b$.
($3 \Rightarrow 1$) Since $(\alpha b \gamma) \alpha = \alpha b$, then $\gamma (\alpha b \gamma) \alpha = \gamma\alpha b$, that is $aba = ab$.

Second part.

($1\Rightarrow 2$) Since $b\comp a = a\comp b\comp a$ then $b\comp a = a\comp b\comp a \sqsupseteq a\comp b$. 
($2\Rightarrow 1$) Since $a\comp b \sqsubseteq b\comp a$, then $a\comp b \comp a \sqsubseteq b\comp a \comp a \sqsubseteq b \comp a$. The other inclusion, $b\comp a \subseteq  a\comp b\comp a$, holds since $a$ is an up-closure.
 ($2\Rightarrow 3$) Observe that $A=Pre(a)$ by construction. For every pre fixed-point $a(x)\sqsubseteq x$, it holds that $ab(x) \sqsubseteq ba(x)\sqsubseteq b(x)$, namely $b(x)$ is a pre fixed-point. One can therefore define $\overline{b}(x) = b(x)$. The fact that $\gamma \overline{b} =  b \gamma$ follows immediately by construction of $\gamma$. 
 ($3 \Rightarrow 2$) By construction of $\gamma$, for every pre fixed point $x$, $b(x)$ is forced to be a pre fixed-point of $a$: $ab(x)\sqsubseteq b(x)$. Therefore  $ab(x)\sqsubseteq b(x) \sqsubseteq ba(x)$.
\end{proof}

\begin{proof}[Proof of Theorem~\ref{prop:compatible}]
For each $y\sqsubseteq ba(y)$, $a(y)\sqsubseteq aba(y) \sqsubseteq baa(y) \sqsubseteq ba(y)$. Therefore $a(y)\sqsubseteq \nu b$. If $x\sqsubseteq y$, then $x\sqsubseteq a(y)\sqsubseteq \nu b$.
\end{proof}

\begin{proof}[Proof of Theorem~\ref{prop:Cousot}]
The assumption of Scott-continuity is necessary to characterise 
\[\alpha(\mu b) = \alpha (\bigsqcup_{n} b^n(\bot_C) ) \qquad \text{ and } \qquad \mu(\alpha b \gamma) =  \bigsqcup_{n} (\alpha b \gamma)^n(\bot_A) \enspace .\] 
Since $\alpha$ is a left adjoint we have that the leftmost is equivalent to  $ (\bigsqcup_{n}  \alpha b^n(\bot_C) $.

By induction on $n$, we prove that $\alpha b^n(\bot_C) = (\alpha b \gamma)^n (\bot_A)$.
\begin{itemize}
\item For $n=0$, $\alpha(\bot_C) = \bot_A$;
\item For $n+1$, we have that $\alpha b \gamma (\alpha b \gamma)^n (\bot_A) = \alpha b\gamma \alpha b^n (\bot_C)$ by induction hypothesis. Using the property of Kl-lifting, the latter is equivalent to $\alpha b b^n (\bot_C) = \alpha b^{n+1}(\bot_C)$.
\end{itemize}
\end{proof}

\begin{proof}[Proof of Proposition~\ref{prop:mod}]
For the first four point see~\cite{pous:aplas07:clut} or Proposition 6.3.11~\cite{PS12}. Points 5 and point 6 follow by duality from points 3 and 4.
\end{proof}

\subsection{Proofs of Section~\ref{sec:bridge}}

\begin{proof}[Proof of Proposition~\ref{prop:correspondencefixedpoints}]
If $(b^*\sqcup i)(x) \sqsubseteq x$, then $ i \sqsubseteq x$ and $b^*x\sqsubseteq x$. From the latter, it follows that $x\sqsubseteq b_* x$. 
Since $x\sqsubseteq f$, then  $x\sqsubseteq b_* x \sqcap f$, that is $x\sqsubseteq (b_* \sqcap f)(x)$.

Conversely, $x \sqsubseteq (b_*\sqcap f)(x)$ entails that $x\sqsubseteq f$ and $x \sqsubseteq b_*x$. From the latter, it follows that $b^*x\sqsubseteq x$. 
Since $i\sqsubseteq x$, then $i\sqcup b^*x \sqsubseteq x$, that is $(i \sqcup b^*)(x) \sqsubseteq x$.
\end{proof}

\subsection{Proofs of Section~\ref{sec:intermezzo}}

\begin{proof}[Proof of Proposition~\ref{prop:dual}]
By induction on $k$. For $k=0$, the above inequality amounts to  $\top \sqsubseteq \top$, which trivially holds.
For $k+1$, $a(b_*\sqcap f)^{k+1}(\top)=a(b_*\sqcap f)(b_*\sqcap f)^k(\top) = a(b_*(b_*\sqcap f))^k(\top) \sqcap a(f) \sqsubseteq b_*(a(b_*\sqcap f))^k(\top) \sqcap f = (b_*\sqcap f)(a(b_*\sqcap f))^k(\top)$. By induction hypothesis, the latter is equal to $(b_*\sqcap f)(b_*\sqcap f)^k(\top)=(b_*\sqcap f)^{k+1}(\top)$.
\end{proof}

\subsection{Proofs of Section~\ref{sec:companion}}

\begin{proof}[Proof of Lemma~\ref{lemma:inclusion}]
If $a_1\sqsubseteq a_2$, then $a_2(x)\sqsubseteq_C x$ entails that $a_1(x)\sqsubseteq x$. 
Vice versa, assume that $Pre(a_2) \subseteq Pre(a_1)$. Since $a_2(x)\in Pre(a_2)$, it also holds that $a_2(x)\in Pre(a_1)$, i.e., $a_1(a_2(x))\sqsubseteq a_2(x)$. We conclude by using the property that $a_2$ is a closure: $a_1(x) \sqsubseteq a_1(a_2(x))\sqsubseteq a_2(x)$.
\end{proof}

\begin{proof}[Proof of Theorem~\ref{thmcompanion}]
First of all observe that by definition $\Omega_b= Pre(\omega_b)$ and that $\Omega_b$ is closed \wrt{}  $b$, that is $b$ restricts and corestricts to $\Omega_b$. By point $\bullet$3 of Lemma~\ref{lemma:equivalentformulation}, $\omega_b$ is $b$-compatible.
Assume now that $a$ is compatible. Then, again by point $\bullet$3 of Lemma~\ref{lemma:equivalentformulation}, $Pre(a)$ should be closed by $b$. Obviously, $\top$ is a pre-fixed point of $a$. Therefore the chain \[\top \sqsupseteq b(\top)\sqsupseteq bb(\top) \sqsupseteq \dots \] should belong to $Pre(a)$. Since $Pre(a)$ is a complete lattice, then also $\nu b =\bigsqcap_{i\in \Nat} b^i(\top)$ belongs to $Pre(a)$. This means that $\Omega_b \subseteq Pre(a)$ and, by Lemma~\ref{lemma:inclusion}, $a\sqsubseteq \omega_b$. Since, by definition $\Omega_b=Pre(\omega_b)$, the last part of the statement follows immediately by Lemma~\ref{lemma:inclusion}.
\end{proof}

\subsection{Proofs of Section~\ref{sec:localcompleteness}}

\begin{lemma}\label{lemma:localmultiple}
	Let $b,a\colon C \to C$ be a map and a closure operator with associated Galois insertion \( (C,\sqsubseteq_C)\galoiS{\alpha}{\beta} (A,\sqsubseteq_A) \).
\begin{enumerate}
\item $\mu (ab) = \mu(aba)$. 
\item $\gamma (\mu \overline{b}^a) = a(\mu (ab))$.
\item For all $f\in Pre(a)$, $\mu \overline{b}^a \sqsubseteq_A f$ if{}f $\mu (ab)\sqsubseteq_C f$.
\end{enumerate}
\end{lemma}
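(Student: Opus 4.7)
The three items will be proved in order, with (2) and (3) building on (1). Throughout I will use freely that $a = \gamma\comp\alpha$, that $\alpha\comp\gamma = id_A$ (since we have a Galois insertion), and that $a$ is idempotent (being a closure), so the image of $\gamma$ coincides with $Pre(a)=Fix(a)=A$.

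For (1) the plan is to show the equality of fixed-point sets $Fix(ab)=Fix(aba)$, from which $\mu(ab)=\mu(aba)$ follows. If $x=ab(x)$ then $x$ lies in the image of $a$, hence $a(x)=x$; therefore $aba(x)=ab(x)=x$. Conversely, if $x=aba(x)$ then again $x$ lies in the image of $a$, so $a(x)=x$, and consequently $ab(x)=ab(a(x))=aba(x)=x$. The two inclusions give the desired equality.

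For (2), the idea is to transport fixed points through the Galois insertion: $\gamma$ restricts to an order-isomorphism $Fix(\overline{b}^a)\cong Fix(ab)$. Indeed, if $x\in A$ satisfies $x=\alpha b\gamma(x)$, then applying $\gamma$ gives $\gamma(x)=\gamma\alpha b\gamma(x)=ab(\gamma(x))$, so $\gamma(x)\in Fix(ab)$. Conversely, any $y\in Fix(ab)$ satisfies $y=ab(y)=\gamma\alpha b(y)$, hence $y\in\mathrm{image}(\gamma)$, i.e., $y=\gamma(\alpha(y))$; using injectivity of $\gamma$ one checks $\alpha(y)\in Fix(\overline{b}^a)$. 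Since $\gamma$ is an order-embedding, it maps the least element on the left to the least on the right, that is, $\gamma(\mu\overline{b}^a)=\mu(ab)$. Because $\mu(ab)\in\mathrm{image}(\gamma)=Pre(a)$, we have $a(\mu(ab))=\mu(ab)$, yielding $\gamma(\mu\overline{b}^a)=a(\mu(ab))$.

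For (3), combine (2) with Lemma~\ref{lemma:absouncomplete}.1. Using that $\gamma$ is an order-embedding and $f=\gamma(\alpha(f))$ since $f\in Pre(a)$, the inequality $\mu\overline{b}^a\sqsubseteq_A f$ is equivalent to $\gamma(\mu\overline{b}^a)\sqsubseteq_C \gamma(\alpha(f))=f$. By (2) this is $a(\mu(ab))\sqsubseteq_C f$, and by Lemma~\ref{lemma:absouncomplete}.1 applied to $x:=\mu(ab)$, that is in turn equivalent to $\mu(ab)\sqsubseteq_C f$.

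The only delicate step is (2): one must verify that the bijection between fixed-point sets induced by $\gamma$ really does preserve the least fixed point. I expect this to be the main obstacle, but it is handled cleanly by the order-embedding property of $\gamma$ together with the observation that every fixed point of $ab$ already lies in $Pre(a)$.
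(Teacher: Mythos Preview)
Your proof is correct. The overall structure matches the paper's, but your arguments for (1) and (2) are organised differently and, in a sense, more conceptually.

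For (1), the paper argues by two inequalities on the least fixed points: $\mu(ab)\sqsubseteq\mu(aba)$ because $ab\sqsubseteq aba$, and $\mu(aba)\sqsubseteq\mu(ab)$ because $\mu(ab)$ is shown to be an $aba$-fixed point via $aba(\mu(ab))=abaab(\mu(ab))=abab(\mu(ab))=\mu(ab)$. You instead prove the stronger fact $Fix(ab)=Fix(aba)$ as sets, which is clean and immediate from the observation that every fixed point of either map already lies in $Fix(a)$.

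For (2), the paper again proceeds by two inequalities: it shows $\gamma(\mu\overline{b}^a)$ is an $ab$-fixed point (giving one direction) and that $\alpha(\mu(aba))$ is an $\overline{b}^a$-fixed point (giving the other, after invoking part~(1)). Your route---exhibiting $\gamma$ as an order-isomorphism $Fix(\overline{b}^a)\cong Fix(ab)$ and transporting the least element---is essentially the same computation repackaged, but it has the pleasant feature that it does \emph{not} rely on part~(1); your opening remark that ``(2) builds on (1)'' is therefore unnecessary. Both approaches ultimately yield $\gamma(\mu\overline{b}^a)=\mu(ab)$ and then use $\mu(ab)\in Pre(a)$ to insert the $a$.

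For (3), you and the paper do exactly the same thing.
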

\begin{proof}
\begin{enumerate}
\item This is Lemma 3.3~\cite{GiacobazziRS00}. For reader's convenience, we report its proof.
The inclusion $\mu(ab)\sqsubseteq \mu(aba)$ is obvious. For the other inclusion, observe that $aba(\mu ab)= aba ab (\mu ab)=abab(\mu ab) = \mu (ab)$, i.e.,  $\mu(ab)$ is a fixed point of $aba$. Thus $\mu(aba)\sqsubseteq \mu(ab)$.
\item Observe that $\gamma (\mu \overline{b}^a)= \gamma \comp  \alpha \comp b \comp \gamma (\mu \overline{b}^a) = ab (\gamma (\mu \overline{b}^a))$, so $\mu ab \sqsubseteq_C \gamma (\mu \overline{b}^a)$. Therefore $\alpha(\mu (ab)) \sqsubseteq \mu \overline{b}^a$.
For the other direction,  $\alpha(\mu aba)=\alpha aba(\mu aba)= \alpha \gamma \alpha b \gamma \alpha (\mu aba) = \alpha b \gamma \alpha (\mu aba)$, that is $\alpha (\mu aba)$ is a fixed point of $\alpha b \gamma$. Therefore $\mu \overline{b}^a \sqsubseteq \alpha(\mu aba) = \alpha (\mu (ab))$.
We thus have  $\alpha(\mu (ab)) = \mu \overline{b}^a$ and then, by monotonicity of $\gamma$,  $\gamma (\mu \overline{b}^a) = a(\mu (ab))$.
\item By point 2 and Lemma~\ref{lemma:absouncomplete}.1.
\end{enumerate}
\end{proof}

\begin{proof}[Proof of Proposition~\ref{prop:localimplies}]
It follows immediately by definition of local completeness and Lemma~\ref{lemma:localmultiple}.3.
\end{proof}

\begin{proof}[Proof of Proposition~\ref{prop:completeifflocal}]
For the top-down implication, assume that  $a$ is $b$-complete. By Lemma~\ref{lemma:alternativecompleteness}, $\mu(ab)\sqsubseteq f$ if{}f $a(\mu b) \sqsubseteq f$. By Lemma~\ref{lemma:absouncomplete}.2, $a(\mu b) \sqsubseteq f$ if{}f $\mu b \sqsubseteq f$, for all $f\in Pre(a)$.

 For bottom up, one can take $f=a(\mu b)$ since $aa(\mu b)=a(\mu b)$. It holds $\mu b \sqsubseteq a(\mu b)$ and thus $\mu(ab) \sqsubseteq a(\mu b)$. The other inclusion $a(\mu b) \sqsubseteq \mu(ab)$ always holds.
\end{proof}

\begin{proof}[Proof of Proposition~\ref{prop:bfclosed}]
If $a_2$ is $(b,f)$-complete, then (1) $a_2(f)\sqsubseteq f$ and (2) $ \mu (a_2 b) \sqsubseteq f$ if{}f  $\mu b \sqsubseteq f$. By (1), we have that $a_1(f)\sqsubseteq a_2(f)\sqsubseteq f$.
Now, as usual, if $\mu(a_1b)\sqsubseteq f$, then  $\mu b\sqsubseteq \mu(a_1b)\sqsubseteq f$. If $\mu b \sqsubseteq f$ then, by (2), $ \mu (a_2 b) \sqsubseteq f$ and thus  $\mu (a_1 b) \sqsubseteq  \mu (a_2 b) \sqsubseteq f$.
\end{proof}

\begin{proof}[Proof of Proposition~\ref{prop:fcompanion}]

\noindent\begin{enumerate}
\item Follows from the fact that the least upper bound of a family of $b$-compatible functions is compatible (see~\cite{pous:aplas07:clut} or Proposition 6.3.11~\cite{PS12}).
\item $\omega_{b,f}(f)=\bigsqcup \{a(f) | a\text{ is $(b,f)$-compatible}\}$. Since $a(f)\sqsubseteq f$ for all $a$ $(b,f)$-compatible, then $\omega_{b,f}(f)\sqsubseteq f$.
\item It is enough to observe that $id$ is $(b,f)$-compatible: $id(f)\sqsubseteq f$ and $id \comp b \sqsubseteq b \comp id$ (Proposition~\ref{prop:mod}.1).
\item By 1 and 2, $\omega_{b,f}$ is $(b,f)$-compatible. It is enough to prove that the composition $a_1\comp a_2$ of two $(b,f)$-compatible maps $a_1,a_2$ is $(b,f)$-compatible. First, $a_1(a_2(f)) \sqsubseteq a_1(f) \sqsubseteq f$. The other property follows from (Proposition~\ref{prop:mod}.2).
\end{enumerate}
\end{proof}

\begin{proof}[Proof of Theorem~\ref{them:localcomp}]
By Proposition~\ref{prop:fcompanion}, $\omega_{b_*,f}$ is $(b_*,f)$-compatible and thus, by Corollary~\ref{cor:bfcomptivility}, $(i\sqcup b^*, f)$-complete. 
Since  $a\sqsubseteq \omega_{b,f}$, by Proposition~\ref{prop:bfclosed}, $a$ is $(i\sqcup b^*, f)$-complete.
\end{proof}

\begin{proof}[Proof of Corollary~\ref{corSBRA}]
Since $b_*$ is a right adjoint, it is cocontinuous, i.e., $b_* \bigsqcap =\bigsqcap b_*$. Therefore the smallest complete lattice containing $f$ and closed \wrt{}  to $b_*$, is
 the complete lattice generated by~\eqref{eq:Omegabf}. If all these generators belong to $Pre(a)$, since $Pre(a)$ is a complete lattice, it holds that $\Omega_{b_*,f}\sqsubseteq Pre(a)$. Then $a\sqsubseteq \omega_{b_*,f}$ and, by Theorem~\ref{them:localcomp}, $a$ is $(i\sqcup b^*, f)$-complete.
\end{proof}

\begin{proof}[Proof of Lemma~\ref{lemma:coinfcomp}]
For monotonicity, take $a_1 \sqsubseteq a_2$. Take $c\colon C \to C$ such that $c b\sqsubseteq ba_1$ and $c(f)\sqsubseteq f$. Then it clearly holds that $c b\sqsubseteq ba_2$. Thus $B(a_1) \sqsubseteq B(a_2)$.

Let us prove now the second property. The right-to-left implication is trivial. For the left-to-right, we have as hypothesis that $a'(x)\sqsubseteq B(a)(x)$ for all $x\in C$. This holds in particular for $b(x)$ and $f$. Therefore $a'(b(x))\sqsubseteq B(a)(b(x))=\bigsqcup \{ cb(x) \mid c b\sqsubseteq ba, c(f)\sqsubseteq f \} \sqsubseteq ba(x)$. Moreover $a'(f)\sqsubseteq B(a)(f) =\bigsqcup \{ cf \mid c b\sqsubseteq ba, c(f)\sqsubseteq f \} \sqsubseteq  f$.
\end{proof}

\end{document}